\def \OO {\mathrm{O}}
\def \oo {\mathrm{o}}
\def\EE{{\mathbb{E}}}
\def\PP{{\mathbb{P}}}
\newcounter{lecnum}
\newtheorem{theorem}{Theorem}
\newtheorem{assumption}{Assumption}
\newtheorem{lemma}{Lemma}
\newtheorem{proposition}{Proposition}
\newtheorem{corollary}{Corollary}
\newtheorem{remark}{Remark}
\newenvironment{proof}{{\bf Proof:}}{\hfill\rule{2mm}{2mm}}
\newcommand{\remove}[1]{}
\begin{document}
%
\title{Effects of Storage Heterogeneity \\in Distributed Cache Systems}

\author{Kota Srinivas Reddy, Sharayu Moharir and Nikhil Karamchandani \\
	Department of Electrical Engineering, Indian Institute of Technology Bombay \\
	Email: ksvr1532@gmail.com, sharayum@ee.iitb.ac.in, nikhilk@ee.iitb.ac.in
}

\maketitle

\begin{abstract}
	In this work, we focus on distributed cache systems with non-uniform storage capacity across caches. We compare the performance of our system with the performance of a system with the same cumulative storage distributed evenly across the caches. We characterize the extent to which the performance of the distributed cache system deteriorates due to storage heterogeneity. The key takeaway from this work is that the effects of heterogeneity in the storage capabilities depend heavily on the popularity profile of the contents being cached and delivered. We analytically show that compared to the case where contents popularity is comparable across contents, lopsided popularity profiles are more tolerant to heterogeneity in storage capabilities. We validate our theoretical results via simulations. 
%
%
\end{abstract}


%
\IEEEpeerreviewmaketitle

\section{Introduction}\label{sec:introduction}

{\let\thefootnote\relax\footnote{This work was supported in part by the Bharti Centre for Communication at IIT Bombay. The work of Sharayu Moharir and Nikhil Karamchandani was supported in part by seed grants from IIT Bombay and an Indo-French grant on ``Machine Learning for Network Analytics". The work of Nikhil Karamchandani was also supported in part by the INSPIRE Faculty Fellowship from the Govt. of India.}}
Recent Internet usage patterns show that Video on Demand (VoD) services, e.g., YouTube \cite{Youtube} and Netflix \cite{Netflix}, account for ever-increasing fractions of Internet traffic \cite{Cisco}. To meet the increasing demand, most popular VoD services use content delivery networks (CDNs). We focus on multiple geographically co-located caches, each with limited storage and service capabilities, deployed to serve users in that area. The motivation behind deploying local caches is to serve most user requests locally. Requests that can't be served locally are served by a central server (which stores the entire content catalog) via a root node, see Figure \ref{fig:cache_cluster}. This setting, also studied in \cite{moharir2016content}, models networks where, \emph{(i)} the ISP (root node) uses local caches to reduce the load on the network backbone or \emph{(ii)} this geographically co-located cache cluster is a part of a larger tree network \cite{borst2010distributed}.

Most VoD service offer catalogs consisting of a large number of contents and serve a large number of users. Motivated by this, we study a time-slotted setting where a batch of requests arrive in each time-slot and every cache can serve at most one request in a batch. Requests that cannot be served locally by the caches are assigned to the central server. Storage and service policies are designed to minimize the number of contents which need to be fetched from the central server to serve all the requests in a batch.



\begin{figure}[t]
	\begin{center}
		\includegraphics[scale=0.33]{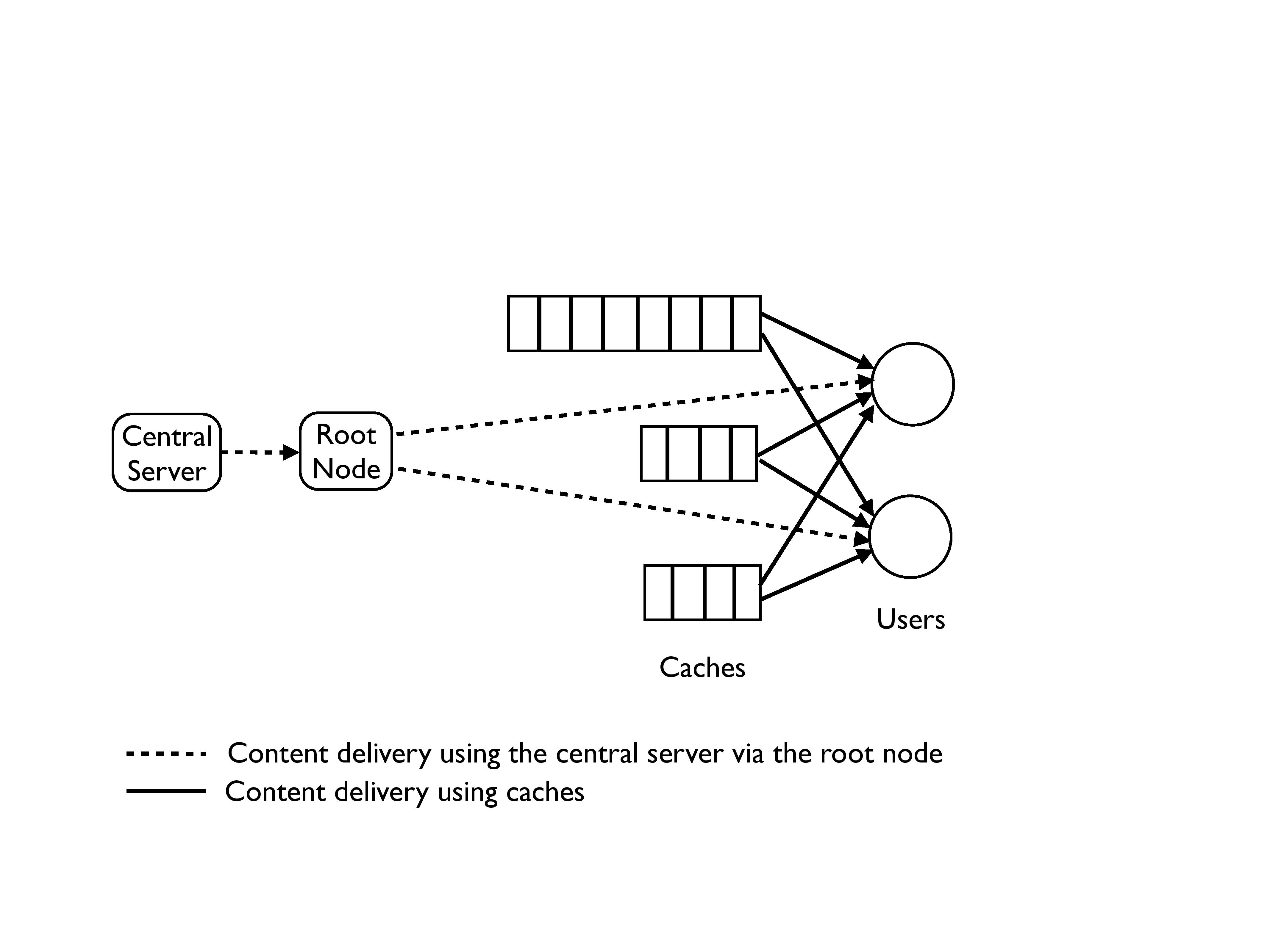}
		\caption{\sl An illustration of a cache cluster consisting of three caches serving two users. The first cache has more storage than the other two. Each user can either be served by the caches or by the central server via the root node. \label{fig:cache_cluster}}
	\end{center}
\end{figure}

The existing body of work in this space considers the setting where storage capabilities are uniform across caches \cite{reddy2017resource, moharir2016content}. In this work, we study the effects of heterogeneity in storage across caches on the performance of the system. The key takeaway of this work is that the effect of heterogeneity in storage capabilities across caches depends on the popularity profile of the contents. 
We show that as content popularity becomes more lopsided, the  system can handle more heterogeneity in cache storage capabilities, i.e., for the same amount of cumulative memory, the performance of the heterogeneous system remains comparable to the performance of a system with uniform storage across caches. 

Intuitively our results can be explained as follows. Increasing the number of contents stored on a cache increases the utility of that cache as it can be used to serve a request for any one of the stored contents. When content popularity is comparable across contents, the fraction of requests in a batch for any particular content  is small. As a result, for a cache with limited storage, it is likely that none of the stored contents are requested, thus leaving the cache unutilized. This increases the number of requests that have to be served via the central server. In contrast, when content popularity is lopsided, the caches with limited storage can be used to store and serve requests for popular contents and the caches with large storage can store a mixture of some popular and a larger number of unpopular contents. This ensures that most caches are utilized, thus reducing the number of requests served centrally.

The main focus of this work is to study the impact of heterogeneity in storage sizes on the performance of a single-layer distributed caching system with a central server. This aspect has been addressed in some other settings as well. \cite{rossi2012sizing} models a caching network as a graph with a cache at each vertex and explores sizing the individual caches according to various vertex centrality metrics. 	\cite{abd2017cache} studies a multi-tier caching network, with a possibly different cache size at each layer. The setting where each user is pre-matched to a server and the central server communicates with the users via an error-free broadcast link has been studied recently under the moniker \textit{`coded caching'} in \cite{maddah2014fundamental} and the impact of heterogeneity in cache sizes in this setting has been explored in \cite{wang2015coded, ibrahim2017centralized, karamchandani2016hierarchical}.

\subsection{Contributions}
The main contributions of this work can be summarized as follows:
\begin{enumerate}
\item We first consider the case where the content popularity distribution follows the Zipf distribution (defined in Section~\ref{request}) with parameter less than $1$, which corresponds to the case where the popularity is comparable across contents. We show that even if  a constant fraction of the caches are restricted to have small memory size as compared to the remaining caches, the required expected server transmission  rate can be much larger than a homogeneous system with the same  cumulative memory.
\item Next, we consider the case where the content popularity distribution follows the Zipf distribution with parameter larger than $1$ and as a result, the content popularity is more lopsided. Unlike the previous case, even if all the memory is concentrated in only a vanishing fraction of the caches, the performance of the system will be similar to a homogeneous system with the same cumulative memory. 
\end{enumerate}
The above results suggest that caching systems are more tolerant to heterogeneity in storage under content popularity distributions which are more lopsided than when  popularity is comparable across contents.

\section{setting} \label{sec:setting}

We study a system consisting of a central server, and $m$ co-located caches, each with limited storage and service capabilities. The central server stores $n$ files\footnote{We use the terms `content' and `file' interchangeably.} of equal size (say $1$ unit = $b$ bits), where $n=\Theta(m^\gamma),$ for some $\gamma\geq 1$. Users make requests for these files, and the user requests are served using the caches and the central server. 

The system operates in two phases: the first phase is the \textit{placement phase,} in which each cache stores content related to the $n$ files and the next phase is the \textit{delivery phase,} in which a batch of requests arrives and are served by the caches and the central server. While files can be split for storage and transmission, this work is restricted to uncoded policies during the placement and the delivery phases. We study the asymptotic performance of the system as $n$, $m$ $\rightarrow \infty.$
	
\subsection{Storage Model} \label{storage}
  Cache $i$ has the capacity to store $k_i$ units of data. Let $M = \sum_{i=1}^{m}k_i$ denote the cumulative cache memory. Without loss of generality, we assume caches are arranged in decreasing order of storage capacity, i.e., if $i<j$, then $k_i\geq k_j$.

\subsection{Request Model} \label{request}
We assume a time-slotted system. In each time-slot, a batch of $\widetilde{m}=\rho m$ (for some $\rho<1$) requests arrive from users according to an i.i.d. distribution. Files are indexed in decreasing order of popularity. 

Numerous empirical studies have shown that content popularity in VoD services follows the Zipf's law \cite{liu2013measurement,BC99,YZ06,fricker2012impact}. Zipf's law states that the popularity of the $i^{\text{th}}$ most popular content is proportional to $i^{-\beta}$, where $\beta$ is a positive constant known as the Zipf parameter. Small values of $\beta$ imply that content popularity is comparable across contents while larger values of $\beta$ correspond to lopsided popularity distributions. Typical values of  $\beta$ lie between 0.6 and 2.

%
%
%

\subsection{Service Model} \label{service}
We assume a delay-intolerant uncoded service system, i.e., all user requests in a given time slot have to be served jointly by the caches and the central server in that time-slot without queuing and coding. To begin with, depending on user requests, we match users with the caches such that no cache is matched\footnote{The more general setting where each cache can serve upto $a \ge 1$ requests simultaneously has been analyzed in \cite{reddy2017resource} for the case of homogeneous cache sizes. A similar analysis can be attempted for the case of heterogeneous cache sizes, however we do not pursue that direction in this paper.} to more than $1$ user. 
Depending on the user requests and the matching between the users and the caches, the central server then transmits a message to the root node which then relays it directly to the users. 
 Using the data received from the assigned caches and the central server message, each user should be able to reconstruct the requested file. Refer to Figure \ref{fig:example} for an example. 

\begin{figure}[t]
	\begin{center}
		\includegraphics[scale=0.33]{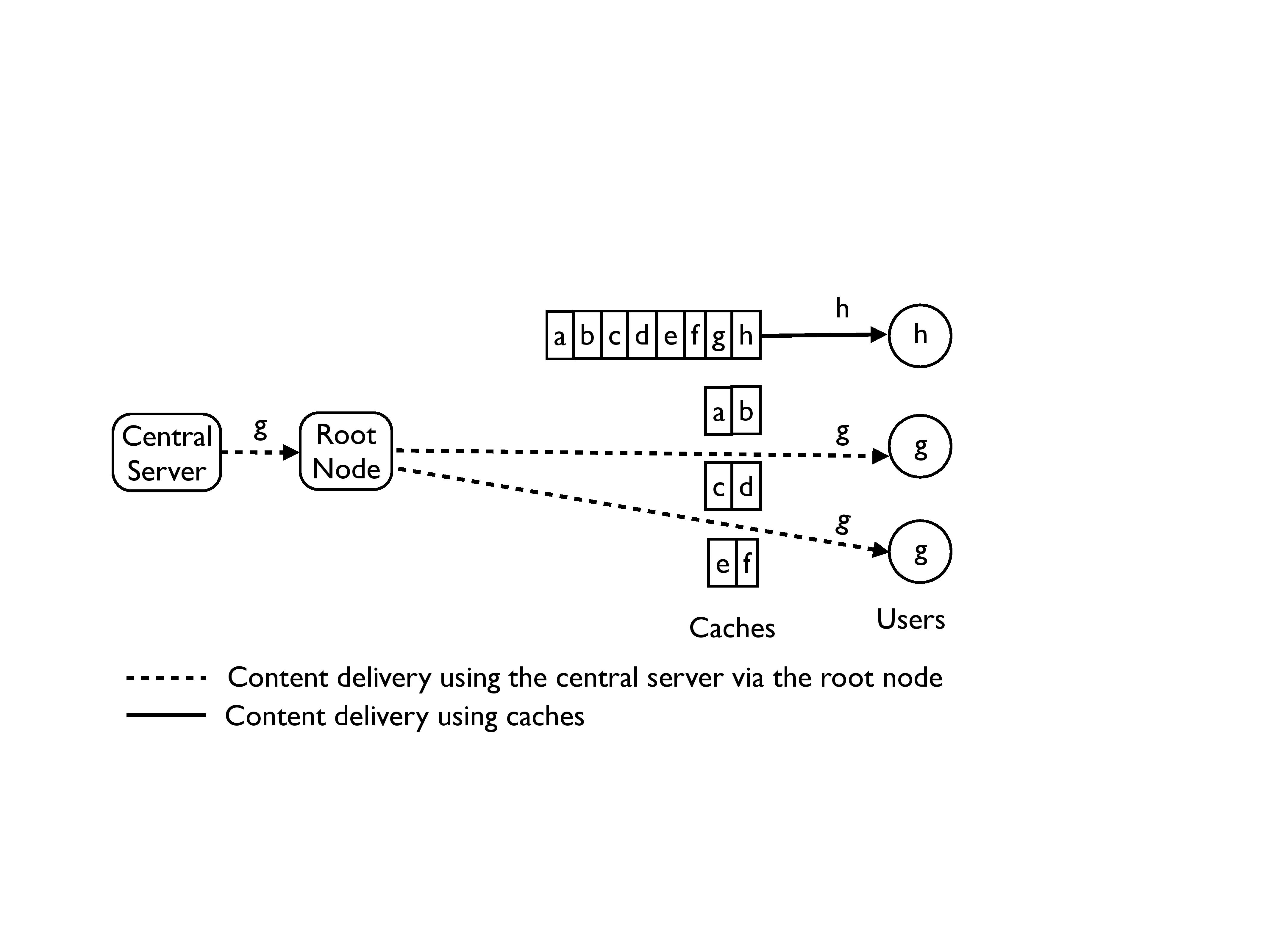}
		\caption{\sl An illustration of a system consisting of four caches serving three users. The catalog consists of eight files $\{a,b,c,d,e,f,g,h\}$. The first user requests file $h$, while the other two request file $g$. The first user is served by the first cache. The other two users are served by the central server. Since both users request for the same file $g$, the central server sends file $g$ to the root node, therefore, the transmission rate in this example is one. \label{fig:example}}
	\end{center}
\end{figure}

\subsection{Goal} \label{goal}
The reason for deploying local caches is that they can help reduce the load on the bottleneck link between the central server and the root node. The goal in such systems is to design efficient storage and service policies to reduce the expected transmission rate required from the central server to satisfy all user requests, where the expectation is with respect to the file popularity distribution. 
Note that if a content needs to be delivered by the central server to multiple users in a time-slot, the central server transmits it to the root node only once. Our storage and service policies depend on the file popularity distribution.

In a departure from the existing body of work on content caching/delivery policies, we characterize the performance of various caching policies for the setting where storage is heterogeneous across caches.

\remove{

1) I think we need a better figure, which illustrates the user requests coming in, their matching to various caches and the role of the central server sending a message to the root. 

2) It's not clear what kind of strategies we are allowing. I think we need to somehow mention here for example that we allow caches to store pieces of files, instead of full files. Similarly, requests can also broken down into requests for pieces etc. In addition, how the message of the central server reaches the end users is not clear. Is it directly to the users, or via the root node, or via the caches? Perhaps the figure can help with this as well, looking at the current figure I don't see any broadcasting and so if we use the term `broadcast' it might be confusing. 

3) Should we indicate somewhere that we are assuming that the contents of the cache remain static while new demands come in each slot. This models a slow changing catalogue and popularity distribution.

4) One example to illustrate storage and service policy and transmission rate.
}

%

\section{Main results and discussion} \label{sec:results}
In this section, we state and discuss our main results. Proofs are given in Section \ref{sec:proofs}. 

We study distributed cache systems characterized as follows:

\begin{assumption}[Distributed Cache System]
	\label{ass:system}
	\begin{enumerate}
		\item[--] $m$ caches.
		\item[--] $n = m^{\gamma}$ files for $\gamma \geq 1$.
		\item[--] All files are of equal size, normalized to one unit.
		\item[--] File popularity: Zipf distribution with parameter $\beta$.
		\item[--] Cumulative cache memory is $M$ units, where $M = m^{\mu}$ for $\mu\geq 1$. 
		\item[--] Each cache can store at least one full file, i.e., $k_i \ge 1$ $\forall \ i$.
		\item[--] Requests are received in batches of $\widetilde{m} = \rho m$, where $\rho < 1.$ Each request is generated i.i.d. according to the popularity distribution.
		\item[--] At most one request in a batch can be allocated to each cache.
	\end{enumerate}
\end{assumption}

\subsection{Zipf distribution with $\beta \in [0,1)$}\label{zipf} 
\label{sec:lessthanone}
We first characterize the performance of a distributed cache system when file popularity follows the Zipf distribution with parameter $\beta \in [0,1)$. 

In addition to understanding the fundamental limit on the performance of any policy, we also evaluate the performance of a policy called Proportional Placement and Maximum Matching (PPMM) proposed in  
\cite{LLM12}.  \cite[Theorem 1]{reddy2017resource} characterizes the performance of the PPMM policy for a homogeneous cache system with the number of caches scaling linearly with the number of files. \color{black} In the PPMM policy, the number of caches that store copies of a file are proportional to its popularity. File copies are stored on caches such that no cache stores the same file multiple times. Once a batch of requests is revealed, a bipartite graph $G(V_1,V_2, E)$ is created, where $V_1$ is the set of requests, $V_2$ is the set of caches, and $E$ is the set of edges. There is an edge between  $v_1\in V_1$ and $v_2 \in V_2$ if Cache $v_2$ can serve request $v_1$, $i.e.$, if it stores a copy of the requested file. Once the bipartite graph is created, the maximum cardinality matching between the set of requests ($V_1$) and the set of caches ($V_2$) is found. All the matched requests are served by the corresponding caches and all the unmatched requests are served by the central server via the root-node. Note that this policy satisfies our service constraint that no cache is allocated more than one request in a batch.

The following result is a straightforward generalization of \cite[Theorem 1]{reddy2017resource} and characterizes the performance of the PPMM policy for a homogeneous cache system, i.e., a system where all caches have the same storage capabilities.

%
%
%
%

\begin{theorem} \label{thm:zipf01}
	Consider a homogeneous distributed cache system satisfying Assumption \ref{ass:system} where all caches have equal storage capacity of $M/m$ units and file popularity follows the Zipf distribution with parameter $\beta \in [0,1)$. For this system, let $R^{\text{PPMM}}_{z_{[0,1)}}$ be the central server's transmission rate for the PPMM policy described above. Then, we have that,
		\begin{eqnarray*}
			\mathbb{E}\left[R^{\text{PPMM}}_{z_{[0,1)}}\right]
			= \begin{cases}
				\OO(m)& \text{if } M <(1-\epsilon) n, \text{ } {\epsilon>0,} \\
				\OO\big( m^2 e^{-\frac{M}{n}}\big) &\text{if } M \geq  n. 
			\end{cases}
		\end{eqnarray*}	


\end{theorem}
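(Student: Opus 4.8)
The plan is to reduce the bound on $\mathbb{E}\big[R^{\text{PPMM}}_{z_{[0,1)}}\big]$ (write $R$ for it below) to a tail estimate for a maximum matching and then apply Chernoff bounds. The first case, $M<(1-\epsilon)n$, needs no argument: a batch contains only $\widetilde m=\rho m$ requests, so the server transmits at most $\widetilde m=O(m)$ files regardless of the placement, i.e. $R\le\widetilde m$ holds deterministically. (This is in fact order-tight for PPMM, since PPMM caches file $j$ on $n_j\propto Mp_j$ caches and thus not at all once $Mp_j<1$, and when $M<(1-\epsilon)n$ a constant fraction of the batch is for such uncached files; but the theorem asserts only the upper bound.) So all the work is in the case $M\ge n$.

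For $M\ge n$ I would use the structure of PPMM: file $j$ gets $n_j=\Theta(Mp_j)$ copies with $\sum_j n_j=M$ (so every cache is full), placed so that the set $C_j$ of caches holding $j$ is ``spread,'' meaning $|N(T)|:=\big|\bigcup_{j\in T}C_j\big|=\min\big(m,\Theta(\sum_{j\in T}n_j)\big)$ for every $T\subseteq[n]$. Letting $d_j\sim\mathrm{Bin}(\widetilde m,p_j)$ be the number of requests for file $j$ (so $\sum_j d_j=\widetilde m$), PPMM serves a maximum matching of the request--cache graph and forwards each unmatched file once; hence $R\le U$, where $U=\max_{T\subseteq[n]}\big(\sum_{j\in T}d_j-|N(T)|\big)$ is the deficiency of that graph. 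Since $U\le\widetilde m$ always, $\mathbb{E}[R]\le\mathbb{E}[U]\le\widetilde m\,\Pr[U>0]$, so it suffices to show that with overwhelming probability the whole batch can be matched, quantitatively $\Pr[U>0]=O\big(\mathrm{poly}(m)\,e^{-M/n}\big)$.

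Next I would bound $\Pr[U>0]=\Pr[\text{Hall's condition fails}]$ by a union bound over offending file-sets $T$. Because $\sum_j d_j=\widetilde m<m$, a set with $\sum_{j\in T}n_j\ge m$ has $|N(T)|=m>\widetilde m$ and cannot violate Hall, so only $T$ with $|N(T)|=\Theta(M\sum_{j\in T}p_j)<m$ matter; for such $T$ the demand $\sum_{j\in T}d_j$ is Binomial with mean $\widetilde m\sum_{j\in T}p_j$ and the threshold $|N(T)|$ exceeds this mean by the fixed factor $M/(\rho m)\ge 1/\rho>1$, so the Chernoff (Poisson-tail) bound gives violation probability $\exp\big(-\Theta\big(M\sum_{j\in T}p_j\cdot\log(M/m)\big)\big)$, which is largest for a single least-popular file, of order $\exp\big(-\Theta\big((M/n)\log(M/m)\big)\big)$ (using $p_n=\Theta(1/n)$ for $\beta<1$). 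Summing over the relevant $T$ — the series is geometrically dominated, up to a polynomial-in-$m$ count of files, by the least-popular term — yields $\Pr[U>0]=O\big(\mathrm{poly}(m)\,e^{-\Theta((M/n)\log(M/m))}\big)$, hence $\mathbb{E}[R]=O\big(\mathrm{poly}(m)\,e^{-\Theta((M/n)\log(M/m))}\big)$. To reach the stated clean form one splits on $M/n$: if $M/n\le\log m$ then $m^2e^{-M/n}\ge m\ge\widetilde m\ge\mathbb{E}[R]$ already; if $M/n>\log m$ then $\log(M/m)=\log(M/n)+\log(n/m)\to\infty$, so the exponent dominates any power of $m$ and $\mathbb{E}[R]=O(m^2e^{-M/n})$; either way the claim follows.

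The hard part will be the structural facts taken for granted above: that PPMM can place the copies so that $|N(T)|=\min(m,\Theta(\sum_{j\in T}n_j))$ for all $T$ (ruling out ``pile-ups'' of short cache-arcs that would make $|N(T)|$ much smaller than $\sum_{j\in T}n_j$), and that the Hall-failure union bound can be restricted to a polynomially sized family of file-sets. These are precisely the ingredients established in \cite[Theorem~1]{reddy2017resource} for $n=\Theta(m)$ (i.e. $\gamma=1$); the only new element here is carrying the dependence on $n=m^{\gamma}$, which enters solely through the Zipf normalization $\sum_{l\le n}l^{-\beta}=\Theta(n^{1-\beta})$ and hence through the exponent $\Theta(M/n)$, and is routine.
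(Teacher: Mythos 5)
Your Case 1 and your final bookkeeping (splitting on $M/n\le\log m$ versus $M/n>\log m$ to pass from $\mathrm{poly}(m)\,e^{-\Theta((M/n)\log(M/m))}$ to the stated $m^2e^{-M/n}$) are fine, but the heart of Case 2 rests on two structural claims that you assert rather than prove, and they are exactly where the difficulty lies. First, the ``spread'' property $|N(T)|=\min\bigl(m,\Theta\bigl(\sum_{j\in T}n_j\bigr)\bigr)$ for \emph{every} $T\subseteq[n]$: the only property of the PPMM placement specified (and used by the paper) is that no cache stores the same file twice. From that alone one only gets the capacity bound $|N(T)|\ge\frac{m}{M}\sum_{j\in T}n_j\approx m\sum_{j\in T}p_j$, which exceeds the mean demand $\rho m\sum_{j\in T}p_j$ by the factor $1/\rho$ only and would give an exponent $\Theta(m/n)$, not the required $\Theta(M/n)$ --- useless once $\gamma>1$. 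So your argument genuinely needs an expander-type property of the copy/cache bipartite graph, holding simultaneously for all $T$ with a sufficiently large implied constant (when the cap $m$ is inactive you need $c\,M\sum_{j\in T}p_j$ above $\rho m\sum_{j\in T}p_j$), and as stated it is even false in part of the theorem's range: for $M=\Theta(n)$ a typical unpopular file has $O(1)$ copies, and taking $T$ to be the single-copy files placed on one cache gives $|N(T)|=1$ while $\sum_{j\in T}n_j=\Theta(M/m)$. Second, the union bound ``restricted to a polynomially sized family of $T$, geometrically dominated by the least-popular singleton'' is not justified; a priori there are $2^n$ Hall-violating candidates, and taming this is a real piece of work. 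Deferring both ingredients to \cite[Theorem 1]{reddy2017resource} is not safe: the present theorem is stated to be the generalization of that result, and its proof here proceeds by a different technique entirely (that of \cite{LLM12}), so there is no evidence that the Hall/expansion ingredients you need are established anywhere.

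For comparison, the paper's proof sidesteps both issues. Each of the $b_i$ requests for file $i$ is split into $d_i=Mp_i$ sub-requests of size $1/d_i$, one for each cache storing $i$; this is a feasible fractional matching provided every cache $s$ satisfies $\sum_{i\in\partial s}b_i/d_i\le 1$, and total unimodularity converts the fractional matching into an integral one. The overload probability of a single cache is bounded by a Chernoff/MGF computation using $p_i\ge(1-\beta)/n$ and the cache size $M/m$, giving $e^{-\frac{M(1-\beta)\rho}{n}h(1/\rho)}$ with $h(x)=x\ln x-x+1$, and a union bound over only the $m$ caches plus $\EE[R^{\text{PPMM}}_{z_{[0,1)}}]\le\widetilde m\,\PP(\text{no matching})$ finishes. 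In other words, the per-cache fractional-matching route needs no placement expansion property and no control over exponentially many file subsets; if you want to keep your Hall-deficiency route, you must prove the spread property for the concrete placement (in the regime $M/n\gtrsim\log m$) and supply the truncated union bound yourself.
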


We use this result to characterize the amount of memory needed in a homogeneous system to ensure that for the PPMM policy, the expected transmission rate of the central server goes to zero as the system size $m$ scales. 

%
%
%
%

\begin{corollary}\label{cor:zipf_0to1}
	Consider a homogeneous distributed cache system satisfying Assumption \ref{ass:system} where all caches have equal storage capacity of $M/m$ units and file popularity follows the Zipf distribution with parameter $\beta \in [0,1)$. For this system, let $R^{\text{PPMM}}_{z_{[0,1)}}$ be the central server's transmission rate for the PPMM policy described above. 
		If $M \geq {3n\ln m}= \Omega\big(n\ln m\big)$, $\EE\big[R^{\text{PPMM}}_{z_{[0,1)}}\big]=\oo(1)$. 
\end{corollary}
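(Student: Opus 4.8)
The plan is to derive this directly from the second case of Theorem~\ref{thm:zipf01}, which bounds $\EE\big[R^{\text{PPMM}}_{z_{[0,1)}}\big]$ by $\OO\big(m^2 e^{-M/n}\big)$ whenever $M \ge n$. First I would check that the hypothesis $M \ge 3n\ln m$ indeed puts us in the regime where that case applies: since $3\ln m \ge 1$ for all $m \ge 2$, we have $M \ge 3n\ln m \ge n$ for all sufficiently large $m$, so Theorem~\ref{thm:zipf01} gives $\EE\big[R^{\text{PPMM}}_{z_{[0,1)}}\big] = \OO\big(m^2 e^{-M/n}\big)$ in the asymptotic regime of interest.

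Next I would substitute the lower bound on $M$ into the exponential term. Because $e^{-x}$ is decreasing, $M \ge 3n\ln m$ implies $e^{-M/n} \le e^{-3\ln m} = m^{-3}$. Plugging this in yields
\begin{eqnarray*}
\EE\big[R^{\text{PPMM}}_{z_{[0,1)}}\big] = \OO\big(m^2 e^{-M/n}\big) = \OO\big(m^2 \cdot m^{-3}\big) = \OO\big(m^{-1}\big),
\end{eqnarray*}
which is $\oo(1)$ as $m \to \infty$. This completes the argument.

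There is essentially no obstacle here: the statement is a quantitative specialization of Theorem~\ref{thm:zipf01}, and the only point requiring a word of care is that the bound in the second case of that theorem is stated for $M \ge n$, so one must observe (as above) that $M \ge 3n\ln m$ comfortably satisfies this for large $m$. One could equivalently remark that any $M = \Omega(n\ln m)$ with a leading constant at least, say, $2$ would already force $\EE\big[R^{\text{PPMM}}_{z_{[0,1)}}\big] = \OO(m^{2 - c})$ for some $c > 2$, and the constant $3$ is chosen merely to make the resulting rate $\OO(1/m)$ explicit and clean.
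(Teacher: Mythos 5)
Your argument is correct and is exactly how the paper obtains this corollary: it is stated as a direct specialization of Theorem~\ref{thm:zipf01}, case $M \ge n$, with $M \ge 3n\ln m$ giving $e^{-M/n} \le m^{-3}$ and hence $\EE\big[R^{\text{PPMM}}_{z_{[0,1)}}\big] = \OO(m^{-1}) = \oo(1)$. No gaps, and no further commentary is needed.
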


We thus conclude that, for a homogeneous distributed cache system and the PPMM policy, a cumulative cache memory of $M = \Omega(n \ln m)$ is sufficient to ensure that the expected transmission rate of the central server goes to zero as the system size $m$ scales. 


Our next result focuses on a heterogeneous distributed cache system, i.e., a distributed cache system where storage is non-uniform across caches. It characterizes the fundamental limit on the performance of any policy and evaluates the performance of the PPMM policy for such a system.


\begin{theorem}\label{thm:het01}
		Consider a heterogeneous distributed cache system satisfying Assumption \ref{ass:system} where file popularity follows the Zipf distribution with parameter $\beta \in [0,1)$. 
		\begin{enumerate}			
			\item[(a)] \emph{Lower bound on transmission rate}: Let $\widetilde{R}^*_{z_{[0,1)}}$ be the central server's transmission rate for optimal policy. There exists an {$\alpha (\rho)  \in (0,1)$} such that if a fraction of the $m$ caches, say $m_2= \alpha \cdot m$ caches, have at most $\OO\big(n / m^{\frac{1}{1-\beta}}\big) $ units of memory, then,
			$\EE\big[\widetilde{R}^*_{z_{[0,1)}}\big]=\omega(1).$
			\item[(b)] \emph{Performance of PPMM}: Let $\widetilde{R}_{z_{[0,1)}}^{\text{PPMM}}$ be the central server's transmission rate for the PPMM policy described above. Then for any $c>0$ and $\delta<1$, if a fraction of the $m$ caches, say $m_1=(\rho+c)m$ caches, have at least $\Omega(n/m^{\delta})$ units of memory, then, $\EE\big[\widetilde{R}^{\text{PPMM}}_{z_{[0,1)}}\big]=\oo(1).$
		\end{enumerate}		
\end{theorem}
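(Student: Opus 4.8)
The plan is to treat the two parts independently. For part (b) I will reduce to the homogeneous PPMM analysis that is already available; for part (a) I will argue that the $m_2=\alpha m$ caches restricted to $\OO(n/m^{1/(1-\beta)})$ units of memory are almost useless, so that the system cannot do much better than one with only the remaining $(1-\alpha)m$ caches, which is too few to serve a batch of $\widetilde{m}=\rho m$ requests once $\alpha$ is close enough to $1$.

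\emph{Part (a).} Note first that the hypothesis forces $\gamma(1-\beta)\ge 1$ (otherwise $\OO(n/m^{1/(1-\beta)})=\oo(1)$ would contradict $k_i\ge 1$); consequently the expected number of requests for any given file is $\OO(1)$, so no file attracts more than a bounded number of requests in expectation. I record two facts about any uncoded policy. (i) Each cache serves at most one request, so at most $(1-\alpha)m$ requests are served by the large caches. (ii) A cache with at most $K=\OO(n/m^{1/(1-\beta)})$ units holds at least half of at most $\OO(K)$ files; hence the set $\mathcal S$ of files of which some small cache holds at least half has $|\mathcal S|=\OO(\alpha m K)=\OO\big(\alpha\,n\,m^{-\beta/(1-\beta)}\big)$, and the number of batch requests whose file lies in $\mathcal S$ is largest when $\mathcal S$ is an initial segment of the popularity order, in which case, using $\sum_{i\le T} i^{-\beta}=\Theta(T^{1-\beta})$, it is at most $\rho m\,(|\mathcal S|/n)^{1-\beta}=\OO(m^{1-\beta})=\oo(m)$ in expectation, and also with high probability by a Chernoff bound; thus the small caches jointly serve at most $\oo(m)$ requests with high probability. (If $\beta=0$ this last bound is a small constant times $m$ rather than $\oo(m)$, and one then needs the constant hidden in $\OO(\cdot)$ to be small relative to $1-\rho$; I suppress this boundary case.) Now the server must transmit every piece of a requested file that is missing from the cache matched to some user requesting it; hence (allowing fractional storage costs at worst a factor $2$) the rate is at least $\tfrac12$ times the number of distinct requested files that are not ``served mostly locally'', and the number of distinct files that \emph{are} served mostly locally is at most the number of requests served mostly locally, which by (i)--(ii) is $(1-\alpha)m+\oo(m)$. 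Since a standard occupancy (balls-in-bins) estimate gives that the number $D$ of distinct requested files satisfies $D\ge c''(\rho,\beta)\,m$ with high probability for a constant $c''(\rho,\beta)>0$ (for instance $c''\ge 1-e^{-\rho}$), we get $\EE\big[\widetilde{R}^*_{z_{[0,1)}}\big]\ge \tfrac12\big(c''m-(1-\alpha)m-\oo(m)\big)$, which is $\Theta(m)=\omega(1)$ as soon as $\alpha>1-c''$. Taking $\alpha(\rho)\in\big(1-c''(\rho,\beta),\,1\big)$ proves part (a).

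\emph{Part (b).} Here I discard the $m-m_1$ caches that are not guaranteed to be large and run PPMM using only the $m_1=(\rho+c)m$ caches with at least $\kappa=\Omega(n/m^{\delta})$ units each; this is a valid PPMM-type policy, and re-introducing the discarded caches only adds edges to the bipartite graph $G(V_1,V_2,E)$, hence only enlarges the maximum matching and cannot increase the number of unmatched requests, so it suffices to bound the rate of PPMM on the sub-system of large caches. That sub-system has cumulative memory $\widetilde{M}:=m_1\kappa=\Omega\big(n\,m^{1-\delta}\big)$; since $\delta<1$ we have $m^{1-\delta}=\omega(\ln m)$, so $\widetilde M=\omega(n\ln m)\ge 3n\ln m_1$ for all large $m$. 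Moreover $m_1>\widetilde m=\rho m$, so there is no shortage of caches. Thus the sub-system satisfies Assumption~\ref{ass:system} with $m_1$ caches, the same catalog of $n=\Theta(m_1^{\gamma})$ files, batch size $\widetilde m=\rho' m_1$ with $\rho'=\rho/(\rho+c)<1$, and cumulative memory $\widetilde M$, all up to constant factors that leave the asymptotics unchanged. Corollary~\ref{cor:zipf_0to1}, i.e. the homogeneous PPMM bound behind Theorem~\ref{thm:zipf01}, applied to this sub-system then yields $\EE\big[\widetilde{R}^{\text{PPMM}}_{z_{[0,1)}}\big]=\oo(1)$.

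\emph{Expected main obstacle.} In part (a) the delicate step is the quantitative estimate of how few batch requests the small caches can absorb --- the Zipf-tail computation, including the role of the constant in $\OO(n/m^{1/(1-\beta)})$ near $\beta=0$ --- together with the (standard but necessary) passage from ``requests not served locally'' to ``\emph{distinct} files not served locally'', since it is distinct files, not requests, that the transmission rate counts. In part (b) the only point that needs care is checking that the sub-system of large caches genuinely meets the hypothesis of Corollary~\ref{cor:zipf_0to1}, i.e. that $\widetilde M$ clears the $\Omega(n\ln m)$ threshold; after that the homogeneous result together with monotonicity of maximum matchings in the heterogeneous graph finishes the proof.
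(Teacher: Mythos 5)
Your proposal is correct, and while part (b) follows essentially the paper's own route, part (a) takes a genuinely different path. For part (b) the paper does exactly what you do: discard the small caches, note that the $(\rho+c)m$ large caches alone form a (sub)system with cumulative memory $\Theta(n\,m^{1-\delta}) \gg n\ln m$ and batch fraction $\rho/(\rho+c)<1$, and invoke Corollary~\ref{cor:zipf_0to1}; your monotonicity remark about only adding edges to the bipartite graph is the same (slightly informal) step the paper compresses into ``$\EE[\widetilde{R}^{\text{PPMM}}]\le\EE[R_B]$''. For part (a) the paper argues differently: it lower-bounds the probability that a small cache is \emph{idle} (none of its stored files requested) by $\big(1-(k/n)^{1-\beta}\big)^{\widetilde m}$, concludes that for $\alpha$ large enough a $\Theta(m)$ number of requests go unserved, and then converts unserved \emph{requests} into a rate bound by dividing by the maximum per-file request multiplicity $m^{\max\{0,1-\gamma(1-\beta)\}+\delta}$ (Lemma~\ref{lemma:max_requests}), which yields only a weak polynomial $\omega(1)$ bound. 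You instead count \emph{distinct} requested files directly: the set $\mathcal S$ of files of which some small cache holds at least half has total Zipf mass $\OO(m^{-\beta})$, so small caches can ``mostly-locally'' absorb only $\oo(m)$ requests, the large caches at most $(1-\alpha)m$, and each of the remaining $\Theta(m)$ distinct requested files costs the server at least half a unit. This buys two things: an explicit treatment of fractional/partial storage (which the paper's idle-cache bound glosses over) and a stronger $\Omega(m)$ rate bound without needing the multiplicity lemma. The trade-off you flag at $\beta=0$ — that the conclusion then depends on the constant hidden in $\OO(n/m^{1/(1-\beta)})$ relative to $1-\rho$ — is not a defect peculiar to your argument: the paper's own choice $\alpha>(1-\rho)e^{\rho/c_1}$ has the same dependence on that constant (and for $\beta\in(0,1)$ your route is actually less sensitive to it). Two minor loose ends you should tighten if writing this up: justify the concentration of the number of distinct requested files (bounded differences suffices), and either prove or drop the specific claim $c''\ge 1-e^{-\rho}$ (it is provable under $\gamma(1-\beta)\ge 1$ via $\widetilde m p_1\le\rho$ and Schur-concavity, but any positive constant depending on $\rho,\beta$ is enough).
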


Intuitively, the lower bound on the transmission rate for any policy can be explained as follows. Increasing the number of contents stored on a cache increases the potential utility of that cache as it can be used to serve a request for any one of the stored contents. When storage is non-uniform across caches and content popularity is comparable across files, the utility of caches with limited storage capabilities is small since content popularity being comparable across files ensures that the fraction of requests for any particular content is small. A consequence of this is that it is very likely that many caches with limited memory go unutilized when serving a batch of requests. A large number of unutilized caches is equivalent to a large number of requests being served via the central server, thus increasing its transmission rate.

In the next result, we use Theorem \ref{thm:het01}(a) and Corollary \ref{cor:zipf_0to1} to highlight the difference between homogeneous and heterogeneous cache systems with the same cumulative memory.

\begin{corollary}
\label{Cor:HetHom}
Let $\alpha \in (0,1)$ be as defined in Theorem~\ref{thm:het01}(a) and $\delta > 0$. Consider distributed caching systems satisfying Assumption \ref{ass:system} where file popularity follows the Zipf distribution with parameter $\beta \in [0,1)$. Consider a heterogeneous system such that $\alpha \cdot m$ caches each have at most $\OO\big(n / m^{\frac{1}{1-\beta}}\big)$ units of memory and the remaining $(1-\alpha)m$ caches each have memory $\Theta\big(n / m^{1 - \delta}\big)$. Then,
			$\EE\big[\widetilde{R}^*_{z_{[0,1)}}\big]=\omega(1).$
			
On the other hand, for a homogeneous system with the same cumulative cache memory $M = \Theta\big(n\cdot m^{\delta}\big)$ as the above heterogeneous system, we have 
$\EE\big[R^{\text{PPMM}}_{z_{[0,1)}}\big]=\oo(1).$
\end{corollary}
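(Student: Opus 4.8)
The plan is to show that Corollary~\ref{Cor:HetHom} is essentially an immediate consequence of the two results it cites, so the ``proof'' is really a consistency check on the parameter regimes. First I would verify the heterogeneous part: the corollary's construction places $\alpha m$ caches with at most $\OO(n/m^{1/(1-\beta)})$ units each and $(1-\alpha)m$ caches with $\Theta(n/m^{1-\delta})$ units each, where $\alpha$ is exactly the constant furnished by Theorem~\ref{thm:het01}(a). Since the small caches satisfy precisely the memory bound in the hypothesis of Theorem~\ref{thm:het01}(a), that theorem applies verbatim and yields $\EE[\widetilde{R}^*_{z_{[0,1)}}] = \omega(1)$ regardless of how much memory the remaining $(1-\alpha)m$ caches have. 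So part one needs almost no argument beyond quoting the theorem and checking that Assumption~\ref{ass:system} is still met (in particular $k_i \ge 1$, which holds since $n/m^{1/(1-\beta)} = m^{\gamma - 1/(1-\beta)}$ and one must note this is $\ge 1$ for the relevant $\gamma$, or simply take the small caches to hold a constant number of files).

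Next I would compute the cumulative memory of this heterogeneous system and feed it into Corollary~\ref{cor:zipf_0to1}. The total memory is $M = \alpha m \cdot \OO(n/m^{1/(1-\beta)}) + (1-\alpha) m \cdot \Theta(n/m^{1-\delta})$. The second term dominates (the first is $o(n)$ since $1/(1-\beta) > 1$, while the second is $\Theta(n m^{\delta})$ which grows), so $M = \Theta(n m^{\delta})$. Now I invoke Corollary~\ref{cor:zipf_0to1}: for a homogeneous system with this same cumulative memory, it suffices that $M = \Omega(n \ln m)$ to get $\EE[R^{\text{PPMM}}_{z_{[0,1)}}] = \oo(1)$. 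Since $n m^{\delta} = \omega(n \ln m)$ for any fixed $\delta > 0$, the hypothesis is comfortably satisfied, giving the second claim.

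The only genuine subtlety — and the step I would be most careful about — is making sure the two regimes are actually compatible, i.e.\ that there is a nonempty range of $\delta$ (and of $\gamma$, $\beta$) for which the heterogeneous construction is both legal and non-vacuous: we need $n/m^{1-\delta} \ge 1$ (so the large caches can hold content), $n/m^{1/(1-\beta)}$ consistent with $k_i \ge 1$ for the small caches, and $\delta$ small enough that the large caches do not individually exceed the per-cache limit implicit in a meaningful problem (no such explicit upper limit is imposed in Assumption~\ref{ass:system}, so this is automatic). One should also note the mild abuse that Corollary~\ref{cor:zipf_0to1} is stated for $M \ge 3 n \ln m$ exactly, so I would state the homogeneous memory as $\Theta(n m^\delta)$ and observe this is $\ge 3 n \ln m$ for all large $m$. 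With those checks in place the corollary follows by direct substitution into Theorem~\ref{thm:het01}(a) and Corollary~\ref{cor:zipf_0to1}, with no new probabilistic argument required.
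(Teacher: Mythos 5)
Your proposal is correct and matches the paper's (implicit) argument: the corollary is stated in the paper precisely as a direct combination of Theorem~\ref{thm:het01}(a) applied to the $\alpha m$ small caches and Corollary~\ref{cor:zipf_0to1} applied to the homogeneous system with the same cumulative memory $M=\Theta(n\cdot m^{\delta})=\Omega(n\ln m)$, with no further probabilistic work. The only nitpick is that the small caches contribute $\OO(n)$ rather than $\oo(n)$ when $\beta=0$ (since then $1/(1-\beta)=1$), but this does not affect the conclusion that the large caches dominate and $M=\Theta(n\cdot m^{\delta})$.
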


\subsection{Zipf distribution with $\beta >1$}
\label{section:morethanone}
We now compare the performances of homogeneous and  heterogeneous systems when file popularity follows the Zipf distribution with parameter $\beta >1$. The next result provides lower bounds on the expected transmission rate for a system with cumulative cache storage of $M$ units.

\begin{theorem} \label{thm:conv_het}		
Consider a distributed cache system satisfying Assumption \ref{ass:system} where file popularity follows the Zipf distribution with parameter $\beta> 1$. Let $\widetilde{R}_{z_{>1}}^*$ denote the optimal transmission rate for any uncoded storage/service policy. \\
-- If $\gamma \leq \dfrac{1}{\beta-1}$,
\begin{align*}
	 \EE\big[\widetilde{R}_{z_{>1}}^*\big]
	& = \begin{cases}
		\Omega\left(m^{1-\mu(\beta-1)}\right)&\text{if } M\le(1-\epsilon)n,\text{ } \epsilon>0, \\
		\Omega\left(m^{\frac{2-\mu\beta}{\beta}}\right)&\text{if } M=n, 
	\end{cases} \\
\EE\big[\widetilde{R}_{z_{>1}}^*\big] & \geq 0 \text{ if } M\ge\left(1+\epsilon\right)n,\text{ } \epsilon>0.
\end{align*}
-- If $\gamma > \dfrac{1}{\beta-1}$,
\begin{align*}
	\EE\big[\widetilde{R}_{z_{>1}}^*\big] &= \Omega\left(m^{1-\mu(\beta-1)}\right)\text{if } M=\oo\left(m^{\frac{1}{\beta-1}}\right), \\
	\EE\big[\widetilde{R}_{z_{>1}}^*\big] &\geq 0 \text{ if } M=\Omega\left(m^{\frac{1}{\beta-1}}\right).
\end{align*}
\color{black}
\end{theorem}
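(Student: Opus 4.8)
The whole theorem follows from one memory--counting lower bound that holds for \emph{every} uncoded policy and depends only on the cumulative memory $M$; the three cases differ only in how this bound is evaluated, and only the boundary $M=n$ requires real work.

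\textbf{Step 1 (a memory--counting lower bound).} Fix any uncoded placement/service policy and let $x_f\in[0,m]$ be the total amount of file $f$ kept in the caches, so $x_f$ is fixed in the placement phase (hence independent of the batch) and $\sum_{f=1}^n x_f\le M$. Let $j_f\sim\mathrm{Bin}(\widetilde m,p_f)$ be the number of requests for file $f$. Since each cache serves at most one request, the $j_f$ users requesting $f$ are matched to $j_f$ \emph{distinct} caches, so together they obtain at most $x_f$ units of $f$ from caches; as each of them must reconstruct all of $f$, the server must transmit at least a $1-x_f/j_f$ fraction of $f$. Delivery being uncoded, the rate is the sum over files of the transmitted fractions, so for every feasible policy and pointwise $\widetilde{R}_{z_{>1}}^{*}\ \ge\ \sum_{f=1}^n\mathbf 1\{j_f\ge1\}\,\max\big(0,\,1-x_f/j_f\big)$. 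Taking expectations and minimising over admissible $\{x_f\}$ gives $\EE[\widetilde{R}_{z_{>1}}^{*}]\ge L(M):=\min_{\{x_f\ge0,\ \sum_f x_f\le M\}}\EE\big[\sum_f\mathbf 1\{j_f\ge1\}\max(0,1-x_f/j_f)\big]$, and everything reduces to lower bounding $L(M)$. I will freely use the elementary Zipf facts for $\beta>1$: the normaliser is $\Theta(1)$, so $p_f=\Theta(f^{-\beta})$ and $\lambda_f:=\widetilde m p_f=\Theta(mf^{-\beta})$; moreover $\PP[j_f\ge1]=1-(1-p_f)^{\widetilde m}$ equals $(1+\oo(1))\lambda_f$ uniformly for $f=\omega(m^{1/\beta})$ and is $\Theta(1)$ for $f=\OO(m^{1/\beta})$.

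\textbf{Step 2 (the case $M\le(1-\epsilon)n$; this also settles $\gamma>\tfrac1{\beta-1}$ with $M=\oo(m^{1/(\beta-1)})$, since then $M=\oo(n)$).} Fix a constant $\delta\in(0,\epsilon)$. Markov's inequality on $\sum_f x_f\le M$ shows fewer than $\tau:=M/(1-\delta)$ files have $x_f>1-\delta$, and $M\le(1-\epsilon)n$ forces $\tau\le(1-c_0)n$ for a constant $c_0>0$. Hence, for any (even randomised) placement, $\sum_f\PP[x_f\le1-\delta]\ge n-\tau$; since $\PP[j_f\ge1]$ is non-increasing in $f$ and $x_f\perp j_f$, the worst policy keeps its well--stored files among the most popular ones, so $\EE[\widetilde{R}_{z_{>1}}^{*}]\ge\delta\,\EE\big[\#\{f:x_f\le1-\delta,\ j_f\ge1\}\big]\ge\delta\sum_{f=\lceil\tau\rceil+1}^{n}\PP[j_f\ge1]$, the factor $\delta$ being the least value of $1-x_f\ (\le 1-x_f/j_f)$ on those files. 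With $\tau=\Theta(m^\mu)$ and $\mu\beta>1$ we have $\lambda_f=\oo(1)$ for $f>\tau$, and $\tau\le(1-c_0)n$ makes $\sum_{f>\tau}^n f^{-\beta}=\Theta(\tau^{1-\beta})$; therefore $\EE[\widetilde{R}_{z_{>1}}^{*}]=\Theta\big(m\sum_{f>\tau}^n f^{-\beta}\big)=\Theta(mM^{1-\beta})=\Theta(m^{1-\mu(\beta-1)})$, which is the first line of each branch; the regime $M\ge(1+\epsilon)n$ asserts nothing.

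\textbf{Step 3 (the boundary $M=n$, hence $\mu=\gamma\le\tfrac1{\beta-1}$; the main obstacle).} Now $\tau=M/(1-\delta)>n$, so the "requested at least once'' counting of Step 2 carries no information and one must look at higher request multiplicities: for an integer $a\ge2$, Markov gives $\#\{f:x_f\ge a\}\le M/a=n/a$, a constant fraction of files has $x_f<a$, and any such file with $j_f\ge a$ contributes $1-x_f/j_f>0$. The difficulty is that the worst policy now "protects'' (with $x_f\ge a$) the files with the largest $\PP[j_f\ge a]$, so one must tune both the multiplicity $a$ and the index window over which the unprotected files are summed so as to maximise the surviving contribution; a crude choice (e.g.\ $a=2$ with a constant-fraction window, using $\PP[j_f\ge2]=\Theta(\min(1,\lambda_f^2))$) only yields $\Omega\big(m^{2+\mu(1-2\beta)}\big)$, which is strictly weaker than the claimed $\Omega\big(m^{(2-\mu\beta)/\beta}\big)$, so the optimisation has to be done carefully. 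I expect this tuning, possibly supplemented by a Hall/deficiency argument on the request--cache bipartite graph to capture that storing each file only once is not enough when a file is demanded repeatedly, to be the only genuinely technical ingredient; once the right window is identified the estimate $\PP[j_f\ge a]=\Theta(\min(1,\lambda_f^{a}))$ and the Zipf bookkeeping of Step 2 close the bound, and $M\ge(1+\epsilon)n$ again needs nothing.
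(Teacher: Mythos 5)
Your Step 1 bound and its evaluation in Step 2 are sound, and for $M\le(1-\epsilon)n$ (hence also for $\gamma>\tfrac{1}{\beta-1}$ with $M=\oo(m^{1/(\beta-1)})=\oo(n)$) they recover the stated $\Omega\big(m^{1-\mu(\beta-1)}\big)$; the paper gets this case even more simply, by merging all caches into a single cache of size $M$, which can hold at most $M$ whole files, so every request for the $n-M\ge\epsilon n$ unstored files hits the server, and the tail sum $\widetilde m\sum_{i>M}p_i=\Theta(m M^{1-\beta})$ gives the same exponent. The cases $M\ge(1+\epsilon)n$ and $M=\Omega(m^{1/(\beta-1)})$ are indeed vacuous.

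The genuine gap is exactly where you flag it: the boundary case $M=n$, which is the only nontrivial part of the theorem, and your Step 3 is a plan rather than a proof (your own crude instantiation yields $\Omega(m^{2+\mu(1-2\beta)})$, strictly weaker than the claimed $\Omega(m^{(2-\mu\beta)/\beta})$). The missing ingredient is the one the paper supplies via its Proposition~\ref{prop:converse}: because each cache serves at most one request, fully shielding file $i$ from the server costs not $1$ but $\max\{\widetilde m p_i,1\}$ units of memory (popular head files must be replicated on about $\widetilde m p_i$ distinct caches), so the adversarial placement is the solution of a fractional knapsack with these weights, which has a threshold structure: it stores an interval $[i_{\min},i_{\max}]$ of files. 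Writing $i_{\min}=m^{\alpha}$, the memory spent on replicating the head between $i_{\min}$ and $\widetilde i=\Theta(m^{1/\beta})$ forces $n-i_{\max}=\Omega(m^{1+\alpha(1-\beta)})$; the rate is then at least the $\approx m^{\alpha}$ head files below $i_{\min}$ (each requested at least once w.h.p., by the Chernoff/Zipf lemma, Lemma~\ref{lemma:Zipf_d}(a)) plus the expected number of requested tail files beyond $i_{\max}$, and balancing the two terms over $\alpha$ produces the stated exponent. Your multiplicity-threshold idea ($a\ge2$ on tail files) points in the wrong direction: for $\beta>1$ the binding cost at $M=n$ is head replication, not repeated demands in the tail, and higher-multiplicity tail events are too rare to matter. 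Your $L(M)$ from Step 1 does encode the replication cost through the $x_f/j_f$ term, so it could in principle be pushed through, but the threshold-structure argument and the head-versus-tail optimization — the actual content of the paper's Case 2 — are absent from the proposal.
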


\begin{remark}
	Note that this result only depends on the cumulative cache memory and is valid for all storage profiles with the same amount of cumulative cache memory. This result is a generalization of a result in \cite{moharir2016content} which holds only if the number of files scales linearly with the number of caches, i.e., $\gamma=1$. 
	
\end{remark}

\subsection{Knapasack Storage + Match Least Popular Policy}

Next, we analyze the performance of a policy called Knapsack Storage + Match Least Popular policy (KS+MLP), proposed in \cite{moharir2016content}.  In \cite{moharir2016content}, it was shown that the KS+MLP policy is orderwise optimal for the homogeneous setting if the number of caches scales linearly with the number of files. \color{black} We first make suitable modifications to the policy to incorporate heterogeneity in memory across caches and analyze its performance for more general storage profiles. We describe the modified KS+MLP policy in detail for the sake of completeness.

The KS+MLP policy comprises of two phases: the \textit{placement phase} and the \textit{delivery phase}.  
\subsubsection{Placement Phase}
In the \textit{placement phase}, the goal is to determine what to store on each cache. This task is completed in two steps. 

%

\noindent \textbf{Knapsack Storage: Part 1} -- In this part, we decide how many caches store copies of each content by solving a Fractional Knapsack problem. In the Fractional Knapsack problem, each object has two attributes, namely, a weight and a value, and the knapsack has a finite weight capacity. The goal is to determine which objects should be added to the knapsack to maximize their cumulative value while the weight constraint of the knapsack is not violated. In the KS+MLP policy, each file corresponds to an object. The weight of an object/file corresponds to the number of caches on which it will be replicated if selected. The weights are chosen such that with high probability, all requests for that file can be served using the caches. More specifically, if file popularity follows the Zipf distribution with parameter $\beta > 1$, the weight of File $i$, denoted by $w_i$ is assigned the following values. 
\begin{align}\label{eq:wi}
w_{i} = \begin{cases}
{m}, & \text{if } i = 1 \\
\big\lceil \big(1 + \frac{p_1}{2}\big) \widetilde{m}p_i \big\rceil, & \text{if } 1 < i \leq n_1 , \\
\big\lceil  {4 p_1(\log m)^2} \big\rceil, & \text{if } n_1 < i \leq n_2, \\
\big\lceil{\frac{1}{\delta}+1}\big\rceil, & \text{if } n_2 < i \leq n, 
\end{cases} 
\end{align}
where $n_1=\frac{(\widetilde{m}p_1)^\frac{1}{\beta}}{(\log m)^\frac{2}{\beta}}$, and $n_2=m^{\frac{1+\delta}{\beta}}$ for some $\delta>0$.
The value of File $i$ is the probability that it is requested at least once in a batch of requests. The weight capacity of the cache system is equal to the cumulative cache memory. Using these parameter values, we solve the following Fractional Knapsack problem: 
	\begin{eqnarray*}
	\max &&  \displaystyle \sum_{i=1}^n x_i (1-(1-p_i)^{\widetilde{m}}) \\
	\text{s.t. }  && \displaystyle \sum_{i=1}^n x_i w_i  \leq M, \\
	&& 0 \leq x_i \leq 1, \text{ } \forall i.
\end{eqnarray*}

 If the solution to the above Knapsack problem gives $x_i=1$, we  store $w_i$ copies of File $i$ else we don't store File $i$. 
 
%
%
%

\textbf{Knapsack Storage: Part 2} -- The previous step determines how many copies of each file will be stored on the caches. The next task is to store the copies of files on caches. File copies are sorted in increasing order of the corresponding file index. For example, consider a system consisting of five caches with $k_1=3, k_2=2, k_3=2, k_4=1, k_5=1$ units of memory. Say the solution for Knapsack Storage: Part 1  gives $x_1 = x_2 = x_3 = x_4 = x_5 = 1$ and 0 otherwise, and $w_1 = 4$, $w_2=2$, $w_3=w_4= w_5= 1$. The sorted list of file copies is illustrated in Figure \ref{fig:knapsack_storage_Part 2_example}. Recall that caches are indexed in decreasing order of memory. The sorted list of file copies is stored on the caches in a round robin manner, i.e., the next file copy is placed on the next cache which has a memory slot available, see Figure \ref{fig:knapsack_storage_Part 2_example} for an illustration. 
%
%
\begin{figure}[h]
\begin{center}
	\textbf{Sorted S:} \\
	\begin{tabular}{| c | c | c | c | c | c | c | c | c |}
		\hline
		$1$ & $1$ & $1$ & $1$ &  $2$ & $2$ & $3$  & $4$  & $5$  \\
		\hline
	\end{tabular} 
\vspace{0.1in}

	\textbf{Storage:}

	\includegraphics[height=2cm]{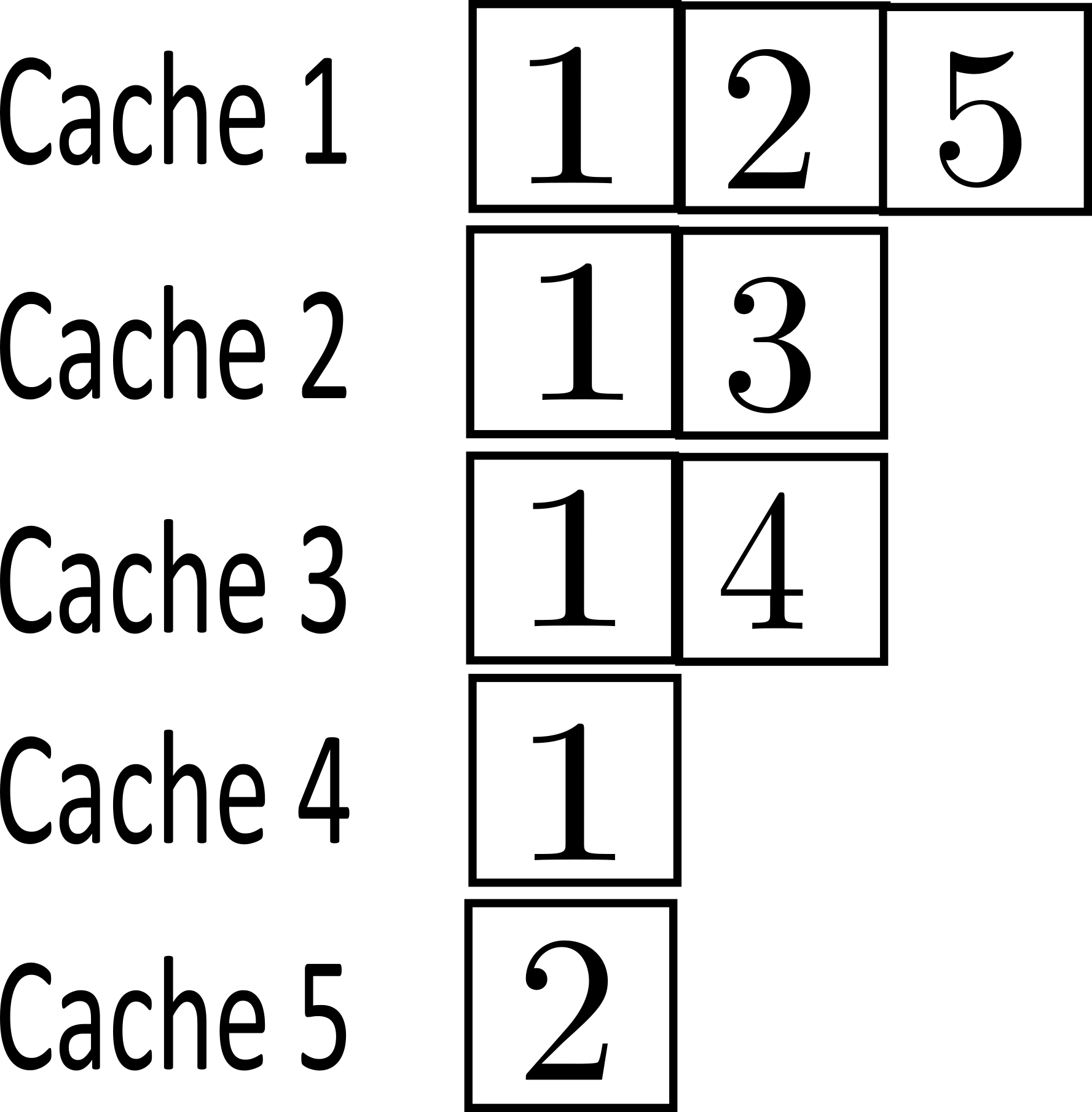}
%
%
%
%
%
%
%
	\caption{Illustration of Knapsack Storage: Part 2 for a system with five caches.}\label{fig:knapsack_storage_Part 2_example}
\end{center}
\end{figure}

\subsubsection{Delivery Phase}
In the \textit{delivery phase}, requests are allocated to caches for service using the Match Least Popular policy (MLP), such that each cache is matched to at most one request. All the matched requests are served by the corresponding caches and all the unmatched requests are assigned to the central server. {As the name suggests,} the Match Least Popular policy matches requests for unpopular files before matching requests for popular files to caches. Refer to Figure \ref{fig:match_least_popular} for a formal definition. 

\begin{figure}[h]
	\hrule
	\vspace{0.1in}
	\begin{algorithmic}[1]
		\STATE initialize $i = n$, set of idle caches $= \{1,2,..., m\}$. 
		\IF {the number of requests for File $i$  is more than the number of idle caches storing File $i$,}
		\STATE goto Step 8.
		\ELSE 
		\STATE match requests for File $i$ to idle caches storing File $i$, chosen uniformly at random. 
		\STATE update the set of idle caches. 
		\ENDIF
		\STATE $i = i-1$, goto Step 2.
	\end{algorithmic}
	\vspace{0.1in}
	\hrule
	\caption{Match Least Popular -- \sl Matches requests to caches.}
	\label{fig:match_least_popular}
\end{figure}
\color{black}

The next theorem evaluates the performance of the KS+MLP policy for a particular sub-class of heterogeneous distributed cache systems. 
\begin{theorem} \label{thm:ach_het}
	Consider a distributed cache system satisfying Assumption \ref{ass:system} where file popularity follows the Zipf distribution with parameter $\beta> 1$ and the top (largest) $\Omega (m^{2-\beta+\delta})$ caches, for any $\delta>0$ have the same storage size. We have no restrictions on the storage sizes of the other (smaller) caches. Let $\EE\big[\widetilde{R}_{z_{>1}}^{\text{KS}}\big]$ denote the expected transmission rate of the KS+MLP policy  for this system. \\
	-- If $\gamma \leq \dfrac{1}{\beta-1}$
	\begin{eqnarray*}
		\mathbb{E}\big[\widetilde{R}_{z_{>1}}^{\text{KS}}\big]
		= \begin{cases}
			\OO\left(m^{1-\mu(\beta-1)}\right)&\text{if } M\leq(1-\epsilon)n,\text{ } 0<\epsilon<1, \\
			\OO\left(m^{\frac{2-\mu\beta}{\beta}}\right)&\text{if } M=n\\
			\OO\left(1\right)&\text{if } M\ge\left(1+\epsilon\right)n,\text{ } \epsilon>0.
		\end{cases}
	\end{eqnarray*}
	-- If $\gamma > \dfrac{1}{\beta-1}$
	\begin{eqnarray*}
		\mathbb{E}\big[\widetilde{R}_{z_{>1}}^{\text{KS}}\big]
		= \begin{cases}
			\OO\left(m^{1-\mu(\beta-1)}\right)&\text{if } M=\oo\left(m^{\frac{1}{\beta-1}}\right), \\
			\OO\left(1\right)&\text{if } M=\Omega\left(m^{\frac{1}{\beta-1}}\right).
		\end{cases}
	\end{eqnarray*}
	 
\end{theorem}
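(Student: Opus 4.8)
The plan is to show that the KS+MLP policy matches, orderwise, the lower bounds of Theorem~\ref{thm:conv_het} in every regime for $(M,\gamma)$, by carefully tracking which files the central server is forced to transmit. Since the server sends each distinct file at most once, $\widetilde R^{\mathrm{KS}}_{z_{>1}}$ is at most the number of \emph{requested} files the caches fail to fully serve, and a requested file $i$ fails only if either it is not stored (the Fractional Knapsack returns $x_i=0$) or it is stored but MLP ``fails'' on it, i.e.\ when MLP processes file $i$ the number of outstanding requests for it exceeds the number of idle caches holding it. Thus $\EE[\widetilde R^{\mathrm{KS}}_{z_{>1}}]\le\EE[\#\{i:x_i=0,\ i\text{ requested}\}]+\EE[\#\{i:x_i=1,\ \text{MLP fails on }i\}]$, and I would bound the two terms separately.

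For the first (unstored-files) term, using the explicit weights \eqref{eq:wi} I would pin down the stored set $S=\{i:x_i=1\}$ up to lower-order terms in each case ($M\le(1-\epsilon)n$, $M=n$, $M\ge(1+\epsilon)n$, each crossed with $\gamma\le 1/(\beta-1)$ versus $\gamma>1/(\beta-1)$): the weights force every head file ($2\le i\le n_1$) to carry strictly more copies than its mean demand, every body file ($n_1<i\le n_2$) to carry $\Theta((\log m)^2)$ copies, and any remaining memory to be spent greedily on tail files, which determines $S$. Then $\EE[\#\{i\notin S,\ i\text{ requested}\}]\le\sum_{i\notin S}(1-(1-p_i)^{\widetilde m})$, and since $\beta>1$ makes the Zipf normalizing constant $\Theta(1)$, a direct tail computation yields precisely the $\OO(m^{1-\mu(\beta-1)})$, $\OO(m^{(2-\mu\beta)/\beta})$ and $\OO(1)$ bounds claimed; one then checks these dominate.

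For the second (MLP-failure) term, write $D_i\sim\mathrm{Bin}(\widetilde m,p_i)$ for the demand of a stored file $i$. A Chernoff bound shows $\PP[D_i>w_i]$ is super-polynomially small for head and body files (this is exactly the purpose of the $(1+p_1/2)$ slack and the $(\log m)^2$ factor), so a union bound over the $\mathrm{poly}(m)$ stored files gives $D_i\le w_i$ for all of them w.h.p. It remains to show MLP actually serves these $D_i$ requests, i.e.\ that at least $D_i$ of the $w_i$ caches holding file $i$ are still idle when MLP reaches it. A cache holding file $i$ is lost for $i$ only if MLP earlier matched it to a request for a strictly less popular file it also stores. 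Under Knapsack Storage: Part~2, file~$1$ (weight $w_1=m$) occupies one slot of every cache, while the total weight of all head and body files is $\Theta(m)$ up to poly-log factors; I would argue these copies all land on the top caches, whose common size --- guaranteed by the hypothesis on the largest $\Omega(m^{2-\beta+\delta})$ caches --- makes the round-robin placement among them coincide with that of a homogeneous system, so the homogeneous-case analysis of \cite{moharir2016content} applies there to bound, w.h.p., the number of copies of each head/body file consumed by less popular files by strictly less than $w_i-D_i$. The remaining, smaller caches store only file~$1$, which MLP processes last and therefore never steals, and a short separate argument shows the $\Omega(m)$ such caches keep file~$1$ served. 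Hence MLP fails on no stored file w.h.p., so this term is $\oo(1)$, and adding the two terms gives the theorem.

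The hard part is this last structural step: controlling, in the round-robin layout over heterogeneous caches, how many copies of each head/body file get ``stolen'' by less popular co-resident files, and checking that homogeneity of just the top $\Omega(m^{2-\beta+\delta})$ caches is enough --- that all the weight of the well-replicated non-file-$1$ content genuinely lands on those caches and that the common size forced by the memory budget is large enough for this to hold. By comparison, the knapsack characterization, the Chernoff estimates, and the Zipf-tail sums are routine.
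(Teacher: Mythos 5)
Your overall skeleton is the same as the paper's: bound the rate by the expected number of requested files with $x_i=0$ plus a matching-failure term (this is Lemma \ref{lemma:performance:our_policy}), handle the first term case-by-case via the knapsack structure and Zipf-tail integrals (the paper's Cases 1--3 for $M\leq(1-\epsilon)n$, $M=n$, $M\geq(1+\epsilon)n$), and show via concentration plus a union bound that w.h.p.\ MLP serves every stored file (Lemmas \ref{lemma:Zipf_d} and \ref{lemma:serve_all_requests}). Those routine parts of your plan are consistent with the paper.

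The gap is exactly at the step you yourself flag as hard, and the route you propose for it does not work as stated. The paper's Lemma \ref{lemma:serve_all_requests} is not a reduction to the homogeneous analysis of \cite{moharir2016content}; it is a direct computation: caches are split into those holding some file in $[2:n_2]$ and the rest, and for each cache one bounds the cumulative Zipf mass of the strictly less popular co-resident tail files, using the fact that under round-robin placement with the largest $m^{\digamma}$ caches of equal size the tail files landing on a given cache have indices spaced by roughly $m^{\digamma}$; this gives a per-cache ``stealing'' demand of $\OO\big(m^{-(1+\delta)}\big)$ precisely when $\digamma\geq 2-\beta+\delta$, after which Chernoff and union bounds show fewer than $w_i-D_i$ copies of each head/body file are consumed. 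That popularity-mass estimate is where the threshold $\Omega(m^{2-\beta+\delta})$ actually comes from, and it is missing from your outline. Moreover, your structural picture is inaccurate: the smaller caches do not store only File 1 (any cache with spare slots receives tail files in the round robin, which is why the paper needs the separate bottom-cache estimate $\PP_{bcp}$); the head/body copies, up to $(1-c_7)m$ of them, need not land one per cache on only the largest $m^{2-\beta+\delta}$ caches; and the homogeneous result of \cite{moharir2016content} is proved for $\gamma=1$ with the number of caches of the same order as the number of requests, so it cannot be invoked for a sub-collection of $m^{2-\beta+\delta}$ caches that faces the full request stream over $n=m^{\gamma}$ files and stores far more tail files per cache. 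Without the explicit per-cache popularity computation (or an equivalent substitute), the claim that homogeneity of just the top $\Omega(m^{2-\beta+\delta})$ caches suffices is unsupported, and that claim is the heart of the theorem.
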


\begin{remark}
	The key takeaways from this result are:
	\begin{itemize}
		\item[--] If the top $\Omega (m^{2-\beta+\delta})$ caches, for any $\delta>0$ have the same memory size, then KS+MLP results match orderwise with the lower bounds in Theorem \ref{thm:conv_het}. Hence, in this case, the KS+MLP policy is orderwise optimal.
		\item[--] Homogeneous systems have all caches with equal memory, and thus Theorem \ref{thm:ach_het} also holds for homogeneous systems. 
	\end{itemize}
\end{remark}

Combining Theorems \ref{thm:conv_het} and \ref{thm:ach_het} we have the following result. 

\begin{corollary}
	\label{cor:zipf_morethan1}
	Consider a distributed cache system satisfying Assumption \ref{ass:system} where file popularity follows the Zipf distribution with parameter $\beta> 1$. 
	If the top $\Omega (m^{2-\beta+\delta})$ caches, for any $\delta>0$ have the same memory size, the performances of the optimal schemes for heterogeneous and homogeneous systems are orderwise equal.
\end{corollary}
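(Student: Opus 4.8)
The plan is to sandwich the optimal transmission rate of the heterogeneous system between the lower bound of Theorem~\ref{thm:conv_het} and the upper bound of Theorem~\ref{thm:ach_het}, and then observe that \emph{the same sandwich applies to the homogeneous system with the same cumulative memory} $M = m^{\mu}$. First I would recall that the lower bound in Theorem~\ref{thm:conv_het} is a statement about \emph{every} uncoded storage/service policy and depends only on $M$, not on how the memory is split across caches; hence it lower-bounds $\EE\big[\widetilde{R}^*_{z_{>1}}\big]$ both for the given heterogeneous storage profile and for the uniform profile $k_i = M/m$.

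Second, I would check that the hypothesis of Theorem~\ref{thm:ach_het} --- that the top $\Omega(m^{2-\beta+\delta})$ caches share a common storage size --- holds for both systems. For the heterogeneous system it is assumed. For the homogeneous system it is automatic: all $m$ caches have identical size, and since $\beta > 1$ we may pick $\delta \in (0, \beta-1)$ so that $2-\beta+\delta < 1$, whence $m^{2-\beta+\delta} = \oo(m)$ and the top $m$ caches indeed number $\Omega(m^{2-\beta+\delta})$ (one must only ensure this $\delta$ is consistent with the $\delta$ appearing in the weight assignment \eqref{eq:wi}). Thus KS+MLP is admissible for both systems, and Theorem~\ref{thm:ach_het} yields the same $\OO(\cdot)$ bound on $\EE\big[\widetilde{R}^{\text{KS}}_{z_{>1}}\big]$ --- and therefore on $\EE\big[\widetilde{R}^*_{z_{>1}}\big]$ --- for each.

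Third, I would go through the regimes one by one: for $\gamma \leq \tfrac{1}{\beta-1}$ the subcases $M \leq (1-\epsilon)n$, $M = n$, $M \geq (1+\epsilon)n$, and for $\gamma > \tfrac{1}{\beta-1}$ the subcases $M = \oo\big(m^{1/(\beta-1)}\big)$ and $M = \Omega\big(m^{1/(\beta-1)}\big)$. In each I would verify that the exponent in the $\OO(\cdot)$ of Theorem~\ref{thm:ach_het} equals the exponent in the $\Omega(\cdot)$ of Theorem~\ref{thm:conv_het}: both are $m^{1-\mu(\beta-1)}$ when $M \leq (1-\epsilon)n$ (or $M = \oo(m^{1/(\beta-1)})$), both are $m^{(2-\mu\beta)/\beta}$ when $M = n$, and both collapse to $\Theta(1)$ once $M$ exceeds $n$ (resp. $m^{1/(\beta-1)}$) by a constant factor. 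This matching forces $\EE\big[\widetilde{R}^*_{z_{>1}}\big] = \Theta(\cdot)$ of a quantity determined solely by $M$, and since exactly the same $\Theta(\cdot)$ characterizes the homogeneous optimum, the two optimal rates are orderwise equal.

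The argument is essentially bookkeeping --- all the real work lives in Theorems~\ref{thm:conv_het} and~\ref{thm:ach_het}. The only mild care points are the boundary regime $M \geq (1+\epsilon)n$ (resp. $M = \Omega(m^{1/(\beta-1)})$), where Theorem~\ref{thm:conv_het} gives only the trivial $\EE\big[\widetilde{R}^*_{z_{>1}}\big] \geq 0$ while Theorem~\ref{thm:ach_het} gives $\OO(1)$, so ``orderwise equal'' must be read as: both systems have optimal rate $\Theta(1)$; and, as noted above, confirming that the $\delta$ from the hypothesis can be taken consistent with the $\delta$ used internally in KS+MLP. I do not anticipate any genuine obstacle beyond this case analysis.
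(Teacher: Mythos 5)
Your proposal is correct and is essentially the paper's own argument: the paper proves the corollary by combining Theorem~\ref{thm:conv_het} (a lower bound depending only on the cumulative memory $M$, hence valid for any storage profile) with Theorem~\ref{thm:ach_het} (a matching achievability bound via KS+MLP, whose hypothesis holds by assumption for the heterogeneous system and trivially for the homogeneous one), exactly as in your sandwich argument. Your regime-by-regime exponent check and the remark about the $M \geq (1+\epsilon)n$ boundary case are just a more explicit rendering of the same bookkeeping.
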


Compare the above result with Corollary~\ref{Cor:HetHom} for $\beta < 1$, which considers a heterogeneous cache system that divides a cumulative cache memory of $M = \Theta(n\cdot m^{\delta}), \delta > 0$, amongst two classes of caches: `\textit{rich}' caches with larger storage size and `\textit{poor}' caches with smaller storage. Corollary~\ref{Cor:HetHom} shows that even if only a constant fraction of the caches are restricted to be poor, it can cause significant disparity between  the performances of heterogeneous and homogeneous systems. On the other hand, in the same setting for $\beta > 1$, Corollary~\ref{cor:zipf_morethan1} shows that even if as many as $m -  \Omega(m^{2-\beta+\delta})$ caches are restricted to be poor with only one unit of memory, the performance of the system will be orderwise the same as the homogeneous system. This suggests that caching systems are more tolerant to heterogeneity in  storage  under Zipf distributions with parameter $\beta> 1$  than under Zipf distributions with parameter $\beta < 1$. 


Intuitively, this difference can be explained as follows. When content popularity is lopsided  ($\beta > 1$), under the KS+MLP policy, caches with limited storage are used to serve requests for popular contents and the caches with large storage which store a mixture of some popular and a large number of unpopular contents typically are allocated to serve requests for unpopular contents. This ensures that the low storage caches are also utilized, unlike the case when content popularity is comparable across files. Since most caches are utilized, the number of requests served by the central server is small. As a result, the effect of storage heterogeneity is lower for lopsided content popularity distributions as compared to distributions where it is comparable across files. 

\color{black}
%

Corollary \ref{cor:zipf_morethan1} describes a sufficient condition under which the performances of the homogeneous and heterogeneous systems remain comparable. Our next result characterizes a degree of heterogeneity sufficient to ensure that the performance of the heterogeneous system is orderwise inferior to that of a homogeneous system with the same amount of cumulative cache memory. 

%
%
%

\begin{theorem}\label{thm:oneuserpercache}
	Consider a distributed cache system satisfying Assumption \ref{ass:system} where file popularity follows the Zipf distribution with parameter $\beta> 1$. 
	Let $\widetilde{R}_{z_{>1}}^*$ denote the optimal transmission rate for any uncoded storage/service policy.	If $\exists$ a subset $\mathcal{S}$ of caches with cumulative memory $|M_{\mathcal{S}}|$ such that
	\begin{itemize}
		\item [--]{ $|\mathcal{S}|\geq m-m^{1-\mu(\beta-1)-\delta}, \  \text{for any }\delta>0$ and} 
		\item [--]	{$|M_{\mathcal{S}}|\leq (1-\epsilon)n,  \  \text{for any }\epsilon>0$,  $$\text{then, } \mathbb{E}[\widetilde{R}_{z_{>1}}^*]\geq \Omega\big(m^{1-\mu(\beta-1)}\big).$$}
	\end{itemize} 
\end{theorem}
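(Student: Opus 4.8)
The plan is to reduce the heterogeneous system to the sub-system consisting of the caches in $\mathcal{S}$ alone — to which Theorem~\ref{thm:conv_het} applies directly, since $|M_{\mathcal{S}}|\le(1-\epsilon)n$ — and then to argue that the at most $m^{1-\mu(\beta-1)-\delta}$ caches in $\mathcal{S}^c$ are too few to affect the order of the transmission rate. Throughout I work in the regime $\mu(\beta-1)<1$; this is the only interesting case, because otherwise $m^{1-\mu(\beta-1)-\delta}<1$ for large $m$, forcing $\mathcal{S}$ to be the entire cache set, so that $M=|M_{\mathcal{S}}|\le(1-\epsilon)n$ and the claim is immediate from (the $M\le(1-\epsilon)n$ case of) Theorem~\ref{thm:conv_het}.

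First I would set up the reduction. Fix any uncoded storage/service policy $\pi$ for the full system; its placement restricted to $\mathcal{S}$ uses cumulative memory $|M_{\mathcal{S}}|\le(1-\epsilon)n$. Define a policy $\pi'$ for the sub-system of caches $\mathcal{S}$ by keeping the same placement on $\mathcal{S}$ and, in the delivery phase, serving a request from $\mathcal{S}$ exactly as $\pi$ does whenever $\pi$ serves it using only caches of $\mathcal{S}$, and routing it to the central server otherwise. Because each cache of $\mathcal{S}^c$ is matched to at most one request, at most $|\mathcal{S}^c|$ requests are handled differently by $\pi'$ than by $\pi$, and since each (fraction of a) file contributes at most one unit to the transmission rate, the rate of $\pi'$ exceeds that of $\pi$ by at most $|\mathcal{S}^c|$ on every batch realization. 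Taking expectations, and since $\pi'$ is in particular a valid sub-system policy, we get for every $\pi$
\begin{equation*}
\EE\big[\widetilde{R}_{\pi}\big]\;\ge\;\EE\big[\widetilde{R}^*_{\mathcal{S}}\big]-|\mathcal{S}^c|,
\qquad\text{hence}\qquad
\EE\big[\widetilde{R}^*_{z_{>1}}\big]\;\ge\;\EE\big[\widetilde{R}^*_{\mathcal{S}}\big]-|\mathcal{S}^c|,
\end{equation*}
where $\widetilde{R}^*_{\mathcal{S}}$ denotes the optimal transmission rate of the distributed cache system built from the caches in $\mathcal{S}$ only (same $n$ files, same $\mathrm{Zipf}(\beta)$ popularity, same batch size $\widetilde{m}=\rho m$).

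Next I would lower bound $\EE[\widetilde{R}^*_{\mathcal{S}}]$ via Theorem~\ref{thm:conv_het}. The sub-system has $m_{\mathrm{sub}}=|\mathcal{S}|\ge m-m^{1-\mu(\beta-1)-\delta}=\Theta(m)$ caches, each of size $\ge1$; cumulative memory $M_{\mathcal{S}}\in[\,m_{\mathrm{sub}},(1-\epsilon)n\,]$, write $M_{\mathcal{S}}=m_{\mathrm{sub}}^{\mu_{\mathrm{sub}}}$ with $\mu_{\mathrm{sub}}\ge1$; batch size $\widetilde{m}=\rho m=\big(\rho+\oo(1)\big)m_{\mathrm{sub}}$, with $\rho+\oo(1)<1$ for large $m$; and $n=m_{\mathrm{sub}}^{\gamma(1+\oo(1))}$ files; so it satisfies Assumption~\ref{ass:system}. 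Since $|M_{\mathcal{S}}|\le(1-\epsilon)n$ covers the branch $\gamma\le\tfrac{1}{\beta-1}$, and since $\mu(\beta-1)<1$ forces $M_{\mathcal{S}}\le M=m^{\mu}=\oo\big(m_{\mathrm{sub}}^{1/(\beta-1)}\big)$, covering the branch $\gamma>\tfrac{1}{\beta-1}$, Theorem~\ref{thm:conv_het} yields in either case
\begin{equation*}
\EE\big[\widetilde{R}^*_{\mathcal{S}}\big]\;=\;\Omega\big(m_{\mathrm{sub}}^{\,1-\mu_{\mathrm{sub}}(\beta-1)}\big)\;=\;\Omega\big(m_{\mathrm{sub}}\cdot M_{\mathcal{S}}^{-(\beta-1)}\big)\;=\;\Omega\big(m\cdot M^{-(\beta-1)}\big)\;=\;\Omega\big(m^{\,1-\mu(\beta-1)}\big),
\end{equation*}
where the third step uses $m_{\mathrm{sub}}=\Theta(m)$ and $M_{\mathcal{S}}\le M$. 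Combining with the previous display, and noting that $\delta>0$ makes $|\mathcal{S}^c|\le m^{1-\mu(\beta-1)-\delta}=\oo\big(m^{1-\mu(\beta-1)}\big)$, we conclude $\EE[\widetilde{R}^*_{z_{>1}}]=\Omega\big(m^{1-\mu(\beta-1)}\big)$.

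The hard part will be the first step: one must argue carefully that restricting an arbitrary uncoded delivery scheme — which may split files and assign several caches to a single request — to the caches of $\mathcal{S}$ increases the transmission rate by at most $|\mathcal{S}^c|$. This rests precisely on the service constraint that no cache serves more than one request, together with the fact that the rate counts distinct (fractions of) files, so losing a request can cost at most one unit of rate. A secondary technical point is verifying that the sub-system on $\mathcal{S}$ genuinely satisfies Assumption~\ref{ass:system} — in particular that the effective load stays below $1$ and that the exponents $\gamma$ and $\mu$ are preserved up to $1+\oo(1)$ factors, so that the correct branch of Theorem~\ref{thm:conv_het} is invoked — and treating the knife-edge $\gamma=\tfrac{1}{\beta-1}$ separately.
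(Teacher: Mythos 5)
Your proposal is correct and follows essentially the same route as the paper: apply the $M\le(1-\epsilon)n$ converse of Theorem~\ref{thm:conv_het} to the caches in $\mathcal{S}$ (whose cumulative memory is at most $(1-\epsilon)n$), and note that the at most $m^{1-\mu(\beta-1)-\delta}$ caches outside $\mathcal{S}$ can each be matched to at most one request and therefore change the server rate by at most $|\mathcal{S}^c|=\oo\big(m^{1-\mu(\beta-1)}\big)$. The paper packages this counting via an auxiliary system with added small caches instead of your explicit restriction-of-policy argument, but the underlying idea is the same (your write-up of the reduction and of the applicability of Theorem~\ref{thm:conv_het} to the sub-system is in fact more careful than the paper's).
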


We thus conclude that if there is a large enough set of caches (with cardinality $m-m^{1-\mu(\beta-1)-\delta}, \  \text{for any }\delta>0$) with cumulative storage less than a constant fraction of the catalog size, the expected transmission rate can't be made arbitrarily small, irrespective of the total cache memory in the system. 

\subsubsection*{Example} Consider a heterogeneous distributed cache system with $m$ caches and $n=cm$ $(c>1)$ files with content popularity following the Zipf distribution with $\beta>1$. We have two classes of caches: `rich' and `poor'. Let the total cumulative memory in the system be $M = (1+\epsilon)n = (1+ \epsilon)cm$ for some $\epsilon>0$. Let $m_1$ denote the number of rich caches, each of which has $k \gg 1$ units of memory. The remaining $m - m_1$ poor caches each have $1$ unit of memory, see Figure~\ref{fig:het3} for an illustration. Thus, we have $m-m_1+m_1k = M = (1+\epsilon)n$. For {some} small $\delta>0$, Figure~\ref{fig:het3} depicts two systems with the same total cumulative memory, in which the number of rich caches is $m_1=m^{2-\beta-\delta}$ and $m_1=m^{2-\beta+\delta}$ respectively. For the former system which has fewer number of rich caches, the expected rate grows as $\Omega(m^{2-\beta})$ from Theorem~\ref{thm:oneuserpercache}. On the other hand, for the latter system which has more rich caches,   Corollary~\ref{cor:zipf_morethan1} shows that the KS+MLP policy achieves $\oo(1)$ rate. So for {some} small $\delta$, modifying the storage profile to change the number of rich caches from $m^{2-\beta-\delta}$ to $m^{2-\beta+\delta}$ can have a dramatic impact on the server transmission rate. 
\color{black}


\begin{figure}[h]
	\centering
	\includegraphics[scale=.45]{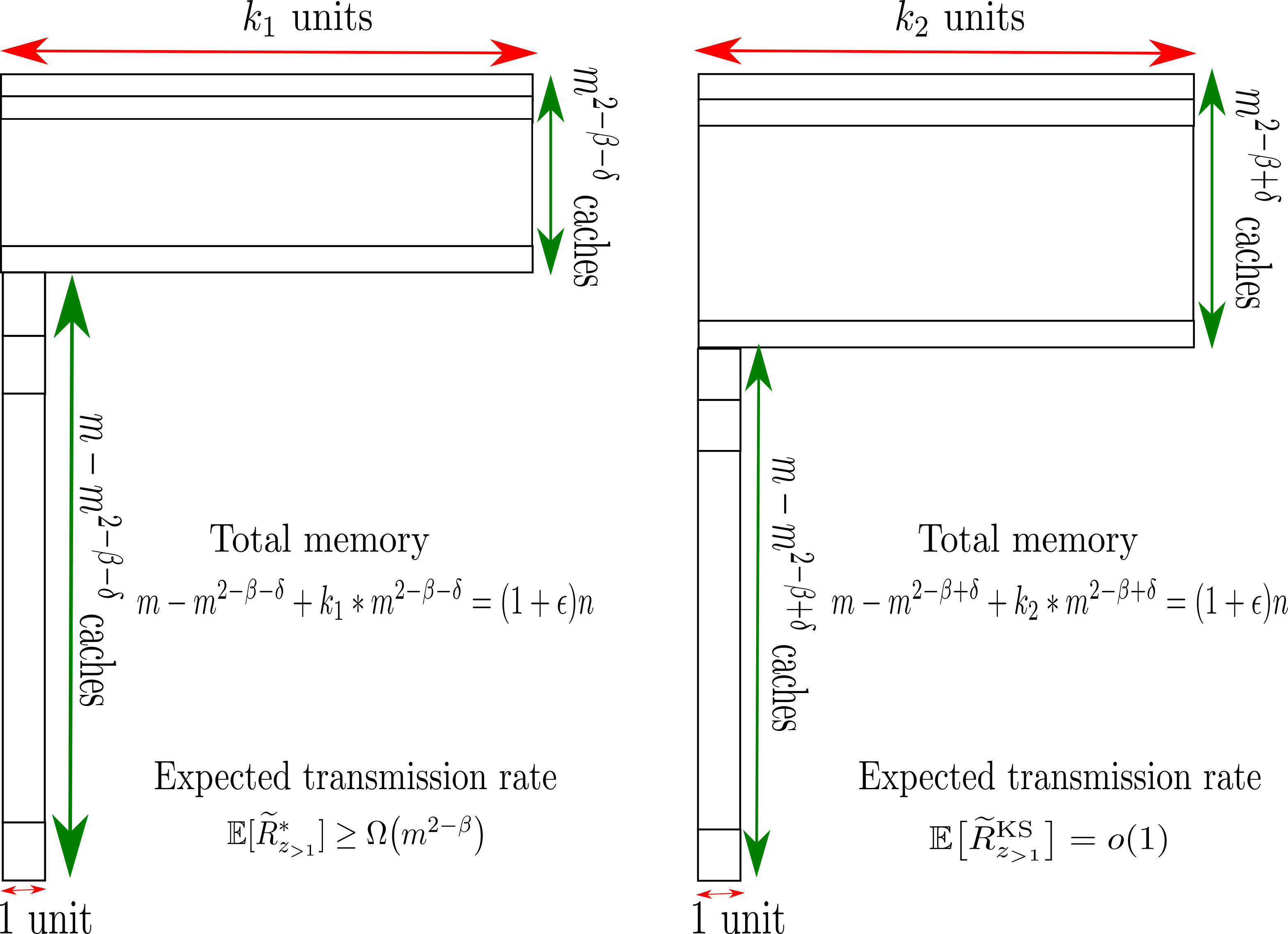}
	\caption{Impact of storage heterogeneity on server transmission rates}
	\label{fig:het3}
\end{figure}

\section{Simulation Results}
\label{sec:simulations}
In Section \ref{sec:results}, we presented asymptotic results as the system size $m$ grows, which compare the effects of storage heterogeneity on the server transmission rate as a function of $\beta$ (or as a function of the popularity profile). In this section, we simulate finite size cache systems and empirically validate some of our theoretical findings.  

First, we consider a system which consists of $m$ caches with total memory $M$ units, $n=m$ files with popularity following the Zipf distribution with $\beta=0.3$, and $\widetilde{m}=0.97m$ requests. Similar to the example in the previous section, we consider two classes of caches: `rich' and `poor',  i.e., out of the $m$ caches, $m_1$ caches (rich caches) each have $k$ units of memory and the remaining $m-m_1$ caches (poor caches) each  have only $1$ unit of memory. As the value of $m_1$ decreases, the  memory is concentrated among fewer caches.  For this system, we simulate the PPMM policy as described in Section~\ref{sec:lessthanone} and consider the server transmission rate, averaged over 100 experiments. 

\begin{figure}[h]
		\centering
		\includegraphics[width = 0.47\textwidth]{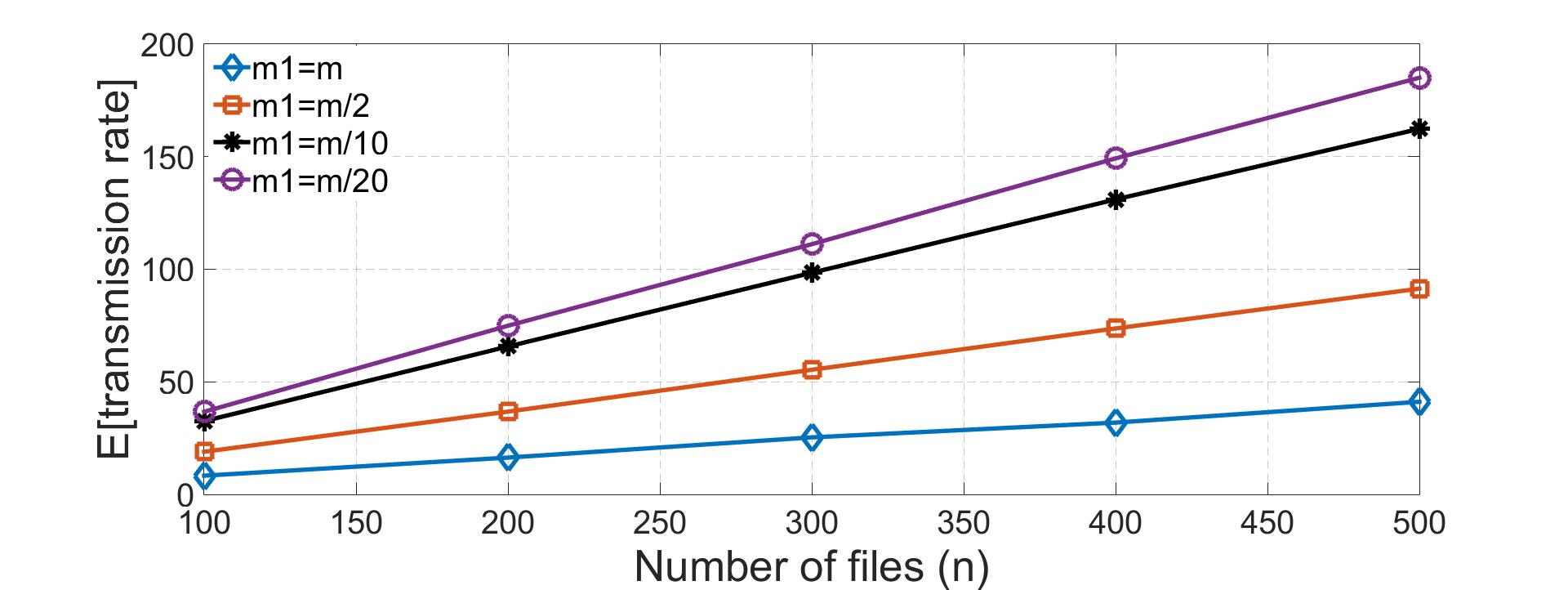}
		\caption{Plot of the average transmission rate vs the number of files ($n$) for PPMM policy with $m_1=\{m, \frac{m}{2},\frac{m}{10},\frac{m}{20}\}$, for a system where the number of caches ($m$) = $n$, the number of requests ($\widetilde{m}$) = $0.97n$, the Zipf parameter ($\beta$) = $0.3$, and  the total memory ($M$) = $3n$.}\label{fig:1} 
\end{figure}

In Figure~\ref{fig:1}, we fix the total memory to $M = m_1k+m-m_1=3m$ units and plot the average transmission rate as a function of number of files $n$ for various values of $m_1$. As expected, \emph{(i)} the transmission rate increases  with $n$, and \emph{(ii)} for any fixed value of $n$, the transmission rate increases drastically as the number of rich caches $m_1$ decreases. As  our result in Corollary~\ref{Cor:HetHom} suggests, there is significant difference between the homogeneous and heterogeneous cases.

\begin{figure}[h]
	\centering
	\includegraphics[width = 0.47\textwidth]{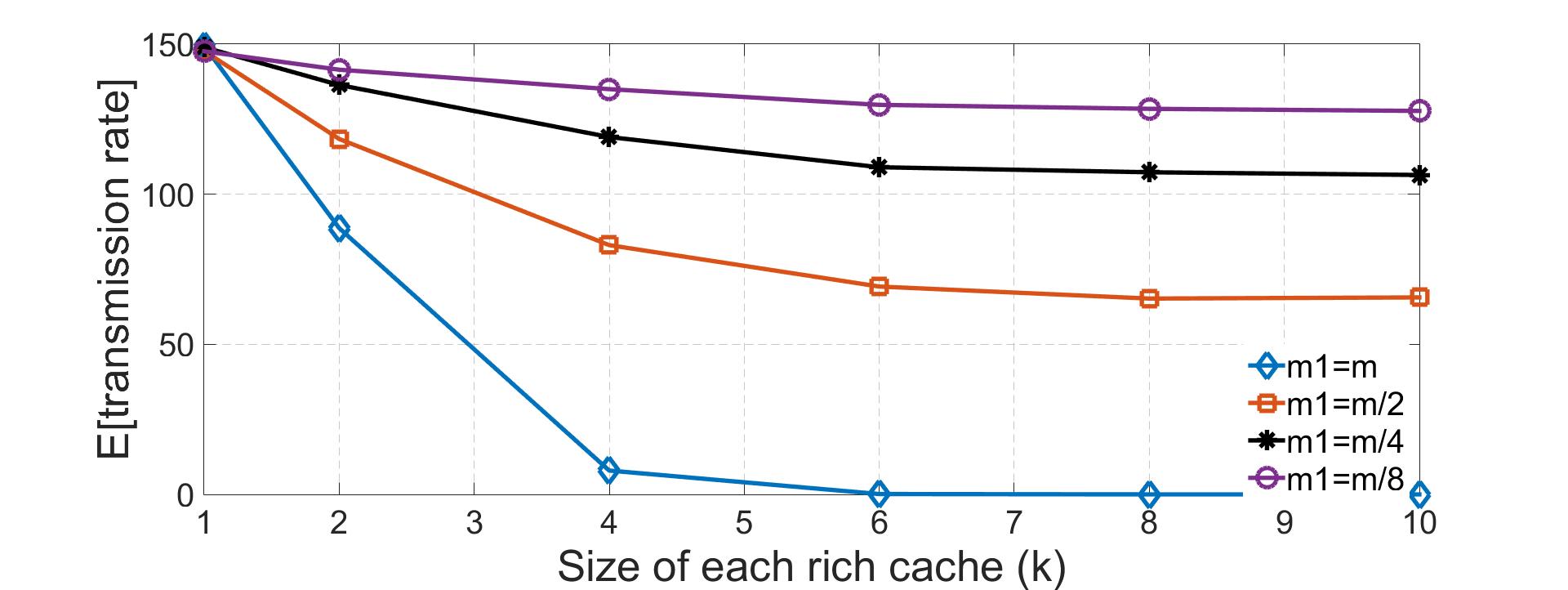}
	\caption{Plot of the mean transmission rate vs cache size (k) of  each of the $m_1$ rich caches for PPMM policy with $m_1=\{m, \frac{m}{2},\frac{m}{4},\frac{m}{8}\}$, for a system where the number of caches ($m$) = the number of files ($n$) = $400$, the number of requests ($\widetilde{m}$) = $0.97n$, and the Zipf parameter ($\beta$) = $0.3$.}\label{fig:2} 
\end{figure}

In Figure \ref{fig:2}, we fix $m$ $=$ $n$ $=$ $400$ and plot the average server transmission rate as a function of the cache size $k$ of each of the $m_1$ rich caches, for various values of $m_1$. As we increase $k$, we expect the transmission rate to decrease initially until all the rich caches serve one request each, and remain constant thereafter since the storage capacity of the poor caches is fixed throughout to $1$ unit. As expected,  \emph{(i)} for the homogeneous case, the average transmission rate decreases exponentially with $k$ until it reaches 0, and  \emph{(ii)} for the heterogeneous case, the average transmission rate decreases initially and remains constant after a certain $k$, depending upon the heterogeneity level ($m_1$).

Next, we consider a system which consists of $m$ caches with total memory $M$ units, $n=5m$ files with popularity following the Zipf distribution with $\beta=1.2$, and $\widetilde{m}=0.97m$ requests. As before, we consider $m_1$ rich caches and $m - m_1$ poor caches. For this system, we simulate the KS+MLP policy as described in Section~\ref{section:morethanone} and consider the server transmission rate, averaged over 100 iterations. 

\begin{figure}[h]
	\centering
	\includegraphics[width = 0.47\textwidth]{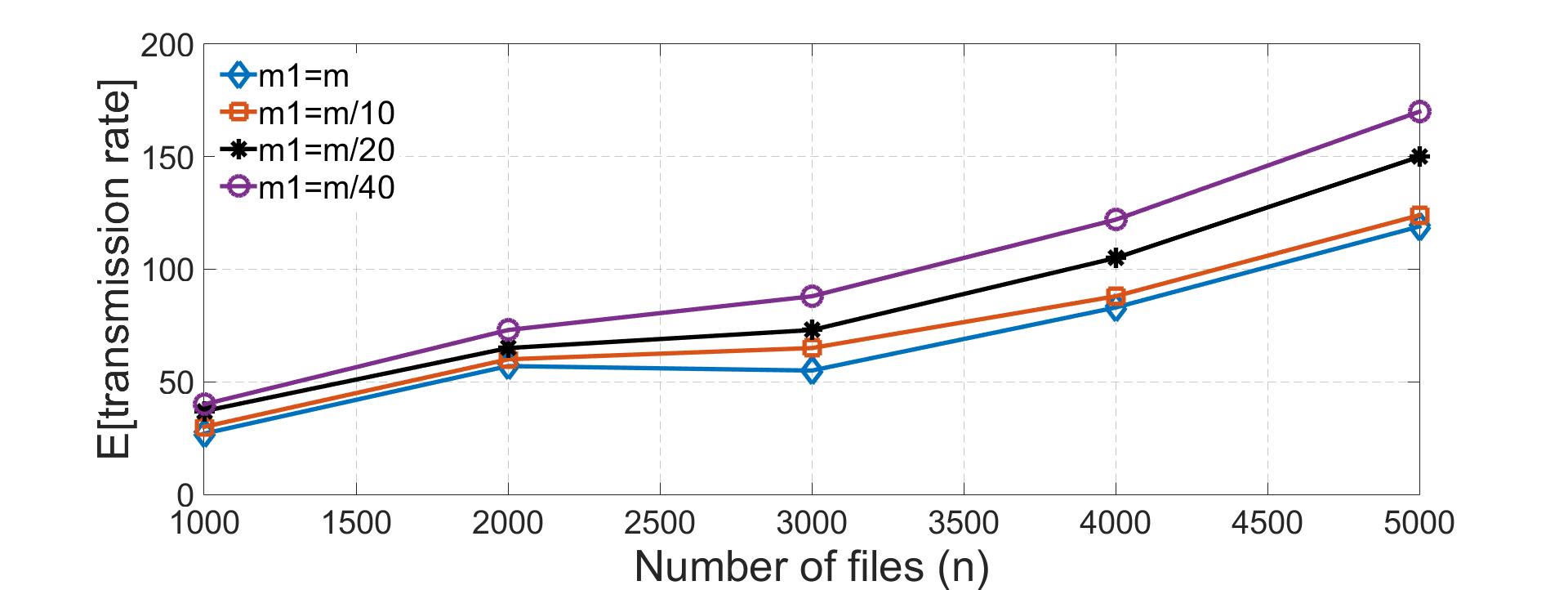}
	\caption{Plot of the average transmission rate vs the number of files ($n$) for KS+MLP policy with $m_1=\{m, \frac{m}{10},\frac{m}{20},\frac{m}{40}\}$, for a system where the number of caches ($m$) = $\frac{n}{5}$, the number of requests ($\widetilde{m}$) = $0.97n$, the Zipf parameter ($\beta$) = $1.2$, and  the total memory ($M$) = $3n$.}\label{fig:3} 
\end{figure}

In Figure \ref{fig:3}, we fix the total memory to $M =m_1k+m-m_1=3m$ units and plot the average transmission rate as a function of number of files $n$ for various values of $m_1$. As expected, \emph{(i)} the transmission rate increases with $n$, and \emph{(ii)} for any fixed value of $n$, unlike the $\beta=0.3$ case (plotted in Figure~\ref{fig:1}), the change in transmission rate  for different values of $m_1$ is small. This is in line with our result in Corollary~\ref{cor:zipf_morethan1}, which suggests that the performances of the  homogeneous and heterogeneous systems are similar.

\section{proofs} \label{sec:proofs}
In this section, we prove  the results stated in Section~\ref{sec:results}{\footnote{Homogeneous basic versions of proofs are given in \cite{moharir2016content, reddy2017resource}.}}. We are interested in order wise results. {In the rest of this section, {we will use $c_{i}$, where $i \in \mathbb{N}$, to represent positive constants.}

\subsection{Proof of Theorem \ref{thm:zipf01}}
We analyze the performance of PPMM policy discussed in Section \ref{sec:results} for $\beta \in [0,1)$ using ideas from the proof of Proposition 1 in \cite{LLM12}, which looks at the  similar setting where the request arrival process is Poisson and $\gamma=1$.\\
	
	\begin{proof} (Theorem \ref{thm:zipf01})

	\textbf{Case 1:} $M<(1-\epsilon)n$, $\epsilon>0$

	Trivial.

	\textbf{Case 2:} $M\geq n$
	
		Recall that each file is stored proportional to its popularity in PPMM policy. $i.e.,$ File $i$ is stored on $d_i=Mp_i$ caches. Let $b_i$ denotes the number of requests for File $i$ in a batch. $b_i$ is Bin($\widetilde{m},p_i$), where, $p_i=\frac{p_1}{i^{\beta}}$.  For $b_i$ requests of File $i$, we split each request into $d_i$ sub-requests, each of size $\frac{1}{d_i}$ units. 
	Let $\partial s$ denote the set of files stored on Cache $s$. For each $i \in \partial s$, we associate $b_i$ sub-requests of File $i$ to Cache $s$. This allocation leads to a fractional matching where the  total data served by each cache is less than $1$ unit if $\forall s\in \{ 1,2,...,m\}$, 
	\begin{align}
	\sum_{i\in \partial s}\frac{b_i}{d_i}\leq 1.\nonumber
	\end{align}
	
	By the total unimodularity of adjacency matrix, the existence of a fractional matching implies the existence of an integral matching \cite{LLM12}.

	For the Zipf distribution with parameter $\beta\in[0,1)$, we have that,
	\remove{\color{green}\begin{align}\label{eq:3.1}
	p_n=\frac{n^{-\beta}}{1+2^{-\beta}+...+n^{-\beta}}.
	\end{align}
	Note that,
	\begin{align} \label{eq:3.2}
	1+2^{-\beta}+...+n^{-\beta}\leq \int_0^{n}i^{-\beta}di 
	=\frac{n^{-\beta+1}}{1-\beta}.
	\end{align}
	From \eqref{eq:3.1} and \eqref{eq:3.2}, we get that,}
	\begin{align}
	p_n \geq 
	\frac{{1-\beta}}{n}. \nonumber
	\end{align}
	
	Let $p^*=\frac{{1-\beta}}{n}$. Hence, $p_i\geq p^*\hspace{0.5in}\forall i.$
	
	\begin{align}
	\mathbb{P}\bigg(\sum_{i\in \partial s}\frac{b_i}{d_i}>1\bigg)&\leq \inf_{s>0}\text{ }\frac{\mathbb{E}\bigg[e^{s\sum_{i\in \partial s}\frac{b_i}{d_i}}\bigg]}{e^s} \nonumber\\
	&\leq \inf_{s>0} \text{ }e^{-s} \prod_{i\in \partial s}\mathbb{E}\Big[e^{\frac{sb_i}{d_i}}\Big]\nonumber\\
	&\text{(due to negative associativity \cite{dubhashi1996balls})}\nonumber\\
	&= \inf_{s>0}\text{ }e^{-s} \prod_{i\in \partial s} e^{\widetilde{m}\ln\big(p_ie^\frac{s}{d_i}+1-p_i\big)}\nonumber\\
	&\text{Since, above function is decreasing in $p_i$}\nonumber\\
	&\leq \inf_{s>0}\text{ } e^{-s} \prod_{i\in \partial s}
	e^{\widetilde{m}\ln\big(p^*e^\frac{s}{Mp^*}+1-p^*\big)}\nonumber\\
	&=  \inf_{s>0}\text{ }e^{-s+\widetilde{m}\frac{M}{m}\ln\big(p^*e^\frac{s}{amkp^*}+1-p^*\big)}\nonumber\\
	&=e^{-\frac{M(1-\beta)\rho}{n}h(\frac{1}{\rho})}\nonumber
	\end{align}
	where $h(x)=x\ln x -x+1$ is the Cramer transform of a unit Poisson random variable.
	Taking the Union bound over all $m$ caches,
	\begin{equation*}
	\mathbb{P}\big(\text{matching exists}\big) \geq 1-me^{-\frac{M(1-\beta)\rho}{n}h(\frac{1}{\rho})}.
	\end{equation*}
	Then, 
	\begin{align*}
	\mathbb{E}[R_{z_{[0,1)}}^{\text{PPMM}}] &\leq 0 \times \mathbb{P}\big(\text{matching exists}\big)\\
	&\hspace{0.5in}+ \widetilde{m} \times (1- \mathbb{P}\big(\text{matching exists}\big)\big) \\
	&\leq \widetilde{m}me^{-\frac{M(1-\beta)\rho}{n}h(\frac{1}{\rho})},
	\end{align*}
	and the result follows.	
	\end{proof}\\

\subsection{Proof of Theorem \ref{thm:het01}}

We use the following lemmas to prove Theorem \ref{thm:het01}.\\

\begin{lemma}
	\label{lemma:chernoff}
	For a Binomial random variable $X$ with mean $\mu$, by the Chernoff bound, $\forall$ $\delta\geq0$,
	\begin{eqnarray*}
		&& \PP(X \geq (1+\delta) \mu) \leq \Bigg(\frac{e^\delta}{(1+\delta)^{(1+\delta)}}\Bigg)^\mu, \\
		&& \PP(X \leq (1-\delta) \mu) \leq e^{-\delta^2 \mu/2}.
	\end{eqnarray*}
\end{lemma}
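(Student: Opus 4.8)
The statement is the standard multiplicative Chernoff bound for sums of independent Bernoulli trials, so the plan is to run the usual exponential-moment argument and then tidy up the lower-tail expression with one elementary calculus inequality. Write $X = \sum_{j=1}^{N} X_j$ as a sum of independent Bernoulli random variables with $\EE[X_j] = p_j$, so that $\mu = \sum_j p_j$. For any $t>0$, applying Markov's inequality to $e^{tX}$ gives $\PP(X \geq (1+\delta)\mu) \leq e^{-t(1+\delta)\mu}\,\EE[e^{tX}]$, and by independence together with the inequality $1+x \leq e^x$ we get $\EE[e^{tX}] = \prod_j\big(1 + p_j(e^t-1)\big) \leq \prod_j e^{p_j(e^t-1)} = e^{\mu(e^t-1)}$. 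Hence $\PP(X \geq (1+\delta)\mu) \leq \exp\big(\mu(e^t-1) - t(1+\delta)\mu\big)$ for every $t>0$.

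First I would optimize the exponent over $t$: it is minimized at $t = \ln(1+\delta) \geq 0$, and substituting this value yields $\PP(X \geq (1+\delta)\mu) \leq \exp\big(\mu\delta - \mu(1+\delta)\ln(1+\delta)\big) = \big(e^\delta/(1+\delta)^{1+\delta}\big)^\mu$, which is the first claim. For the lower tail I would repeat the argument with $e^{-tX}$, $t>0$, obtaining $\PP(X \leq (1-\delta)\mu) \leq e^{t(1-\delta)\mu}\,\EE[e^{-tX}] \leq \exp\big(\mu(e^{-t}-1) + t(1-\delta)\mu\big)$; the optimal choice here is $t = -\ln(1-\delta) \geq 0$, valid for $\delta \in [0,1)$ (for $\delta \geq 1$ the event is contained in $\{X=0\}$ and the bound is trivial), which gives $\PP(X \leq (1-\delta)\mu) \leq \exp\big(-\mu\delta - \mu(1-\delta)\ln(1-\delta)\big) = \big(e^{-\delta}/(1-\delta)^{1-\delta}\big)^\mu$.

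The only step that requires a little work — and hence the closest thing to an obstacle, though it is entirely routine — is passing from this last expression to the stated clean bound $e^{-\delta^2\mu/2}$. This reduces to showing $g(\delta) := \delta + (1-\delta)\ln(1-\delta) - \delta^2/2 \geq 0$ on $[0,1)$. I would check $g(0) = 0$ and compute $g'(\delta) = -\ln(1-\delta) - \delta$, which is nonnegative because $-\ln(1-\delta) \geq \delta$ on $[0,1)$ (immediate from $\ln(1-x) \leq -x$). Thus $g$ is nondecreasing, hence $g \geq 0$, which gives $-\mu\delta - \mu(1-\delta)\ln(1-\delta) \leq -\mu\delta^2/2$ and completes the proof of the second inequality.
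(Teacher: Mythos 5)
Your proof is correct: it is the standard exponential-moment (Chernoff) derivation, with the upper tail obtained by optimizing $t=\ln(1+\delta)$ and the lower tail cleaned up via $\delta+(1-\delta)\ln(1-\delta)\geq \delta^2/2$, and all steps check out. The paper itself offers no argument beyond citing standard Binomial/Chernoff properties, so your write-up simply supplies in full the same textbook proof the paper implicitly invokes.
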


\begin{proof}
	{Follows from the properties of the Binomial distribution.}
\end{proof}\\

{\begin{corollary}\label{cor:chernoff}
	For a Binomial random variable $X$ with mean $\mu$, for all $0\leq \delta \leq 1$,
	$$\PP(X \leq (1+\delta) \mu) \leq e^{-\delta^2 \mu/3}.$$
\end{corollary}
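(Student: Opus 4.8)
The plan is to derive the bound directly from Lemma~\ref{lemma:chernoff}; note that, as written, the statement should presumably read $\PP(X \ge (1+\delta)\mu) \le e^{-\delta^2\mu/3}$, since it is the upper-tail bound that gets invoked later. Lemma~\ref{lemma:chernoff} already gives
$$\PP\big(X \ge (1+\delta)\mu\big) \le \left(\frac{e^{\delta}}{(1+\delta)^{1+\delta}}\right)^{\mu},$$
so, since $t \mapsto t^{\mu}$ is monotone for $\mu \ge 0$, it suffices to prove the $\mu$-free inequality $\dfrac{e^{\delta}}{(1+\delta)^{1+\delta}} \le e^{-\delta^2/3}$ for every $\delta \in [0,1]$.

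Taking logarithms, this is equivalent to $(1+\delta)\ln(1+\delta) - \delta \ge \delta^2/3$ on $[0,1]$. I would prove it via the Taylor expansion $(1+\delta)\ln(1+\delta) - \delta = \sum_{k \ge 2} \frac{(-1)^k}{k(k-1)}\,\delta^k = \frac{\delta^2}{2} - \frac{\delta^3}{6} + \frac{\delta^4}{12} - \cdots$, which for $\delta \in [0,1]$ is an alternating series whose terms decrease in magnitude; truncating after the first negative term gives $(1+\delta)\ln(1+\delta) - \delta \ge \frac{\delta^2}{2} - \frac{\delta^3}{6} \ge \frac{\delta^2}{2} - \frac{\delta^2}{6} = \frac{\delta^2}{3}$, where the last step uses $\delta \le 1$. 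Alternatively, setting $g(\delta) = (1+\delta)\ln(1+\delta) - \delta - \delta^2/3$, one has $g(0) = 0$ and $g'(\delta) = \ln(1+\delta) - \tfrac{2}{3}\delta$ with $g'(0) = 0$; since $g''(\delta) = \tfrac{1}{1+\delta} - \tfrac{2}{3}$ is positive on $[0,\tfrac12)$ and negative on $(\tfrac12,1]$ while $g'(\tfrac12) > 0$ and $g'(1) = \ln 2 - \tfrac23 > 0$, it follows that $g' \ge 0$ throughout $[0,1]$, hence $g \ge g(0) = 0$.

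There is essentially no obstacle here: the only content is the elementary one-variable inequality above, which the alternating-series argument dispatches in one line, and everything else is monotonicity of $t \mapsto t^{\mu}$ applied to the bound from Lemma~\ref{lemma:chernoff}. The one point worth stating explicitly in the write-up is the direction of the inequality being claimed, so that the corollary is used consistently (as an upper-tail concentration bound) in the later proofs.
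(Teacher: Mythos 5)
Your proof is correct, and it follows exactly the route the paper implicitly intends: the paper states Corollary~\ref{cor:chernoff} without proof as a standard simplification of the Chernoff bound in Lemma~\ref{lemma:chernoff}, which is precisely your reduction to the elementary inequality $(1+\delta)\ln(1+\delta)-\delta \ge \delta^2/3$ on $[0,1]$, and both your alternating-series and calculus arguments for that inequality are sound. You are also right that the displayed inequality should read $\PP(X \ge (1+\delta)\mu) \le e^{-\delta^2\mu/3}$ (an upper-tail bound); the ``$\leq$'' in the statement (and its subsequent use in Lemma~\ref{lemma:max_requests}) is a typo.
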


\begin{lemma} \label{lemma:max_requests}
	For a content delivery system satisfying Assumption \ref{ass:system} with file popularity follows the Zipf distribution with Zipf parameter $\beta\in [0,1)$, let $d_i$ represents the number of requests for File $i$ in a batch.  Then, for any $\delta>0$, $$\PP\big(d_i \leq 2m^{\max\{0,1-\mu(1-\beta)\}+\delta}\big)=\OO\Big(e^{-m^{\max\{0,1-\mu(1-\beta)\}+\delta}}\Big).$$
	
\end{lemma}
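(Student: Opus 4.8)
The plan is a one-shot Chernoff-plus-union-bound argument. A batch consists of $\widetilde m = \rho m$ i.i.d.\ Zipf$(\beta)$ requests, so $d_i \sim \mathrm{Bin}(\widetilde m, p_i)$ with $p_i = p_1 i^{-\beta}$ and $p_1 = (\sum_{j=1}^{n} j^{-\beta})^{-1}$. First I would pin down the mean: for $\beta\in[0,1)$ one has $\sum_{j=1}^{n} j^{-\beta} = \Theta(n^{1-\beta})$ (for $\beta=0$ read this as $n$), so $\EE[d_i] = \widetilde m p_i = \Theta\!\big(m\, n^{-(1-\beta)}\, i^{-\beta}\big)$, which is largest at $i=1$, where it equals $\Theta\!\big(m^{\,1-\gamma(1-\beta)}\big)$ since $n = m^{\gamma}$. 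Write $T := 2m^{\max\{0,\,1-\gamma(1-\beta)\}+\delta}$ for the threshold. The structural point is that $T$ overshoots $\EE[d_i]$ for \emph{every} $i$ by at least a factor of order $m^{\delta}$, and moreover $T \ge 2m^{\delta}$; the two branches of the $\max$ merely record whether the largest mean $\Theta(m^{1-\gamma(1-\beta)})$ is itself polynomially large ($\gamma(1-\beta)\le 1$) or $\oo(1)$ ($\gamma(1-\beta) > 1$), but the estimate below is uniform across this dichotomy.

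Next I would apply the multiplicative Chernoff bound of Lemma~\ref{lemma:chernoff} to $d_i$ with deviation parameter $\eta := T/\EE[d_i] - 1 = \Theta(m^{\delta})$. A short computation reduces its right-hand side to
\[
\PP(d_i \ge T)\ \le\ \big(e^{\eta}/(1+\eta)^{1+\eta}\big)^{\EE[d_i]}\ \le\ \Big(\tfrac{e\,\widetilde m\, p_i}{T}\Big)^{T}\ \le\ \big(c_1 m^{-\delta}\big)^{T},
\]
using $\widetilde m p_i / T = \OO(m^{-\delta})$ from the first paragraph. Since $T \ge 2m^{\delta}$ and $(c_1 m^{-\delta})^{T} = e^{-(\delta\ln m - \OO(1))\,T}$, for all large $m$ this is at most $e^{-3\,m^{\max\{0,\,1-\gamma(1-\beta)\}+\delta}}$; the spare $\ln m$ factor is what makes this work. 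A union bound over the $n = m^{\gamma}$ files then gives $\PP\big(\max_{i} d_i \ge T\big) \le m^{\gamma}\, e^{-3\,m^{\max\{0,\ldots\}+\delta}} = \OO\big(e^{-m^{\max\{0,\,1-\gamma(1-\beta)\}+\delta}}\big)$, the polynomial factor $m^{\gamma}$ being swallowed by the super-polynomial decay; and $d_i \le \max_i d_i$ gives the bound for each index $i$.

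The step I expect to need the most care is the quantitative bookkeeping rather than any new idea: making the estimate $\sum_{j\le n} j^{-\beta} = \Theta(n^{1-\beta})$ — and hence the $m^{\Omega(\delta)}$ gap between $T$ and $\EE[d_i]$ — fully explicit, including at the boundary $\gamma(1-\beta) = 1$ and in the uniform case $\beta = 0$; and verifying that the $\ln m$ slack genuinely upgrades the per-file bound to $e^{-3T/2}$-type decay, so that the union bound over $m^{\gamma}$ files is for free. Everything else is a single invocation of Lemma~\ref{lemma:chernoff} (or Corollary~\ref{cor:chernoff}) followed by a union bound.
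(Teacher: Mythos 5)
Your proof is correct and follows essentially the same route as the paper: both treat the event as the upper deviation $d_i \ge 2m^{\max\{0,1-\gamma(1-\beta)\}+\delta}$ (the ``$\le$'' in the statement, like the $\mu$ in place of $\gamma$, is evidently a typo, as the later use in Theorem~\ref{thm:het01}(a) confirms) and control it by a single multiplicative Chernoff bound on the binomial $d_i$, whose mean is maximized at File~1 where it is $\Theta(m^{1-\gamma(1-\beta)})$. The only cosmetic difference is that the paper dominates $d_i$ by an auxiliary binomial with inflated mean $m^{\max\{0,1-\gamma(1-\beta)\}+\delta}$ and invokes Corollary~\ref{cor:chernoff} at twice that mean, whereas you apply Lemma~\ref{lemma:chernoff} directly with deviation parameter $\eta=\Theta(m^{\delta})$ and then union bound over the $n=m^{\gamma}$ files, which if anything gives sharper constants in the exponent and the uniform-over-files guarantee that the application actually needs.
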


\begin{proof}
	The popularity of File 1 is $p_1=\frac{1}{\sum_{i=1}^{n}i^{-\beta}}\leq \frac{1}{n^{1-\beta}}$ for large $n$. Under Assumption \ref{ass:system}, the number of requests for File 1 is Bin($\widetilde{m},p_1$) and the expected number of requests is $\leq m^{1-\gamma(1-\beta)}$. Consider a new binomial random variable $X$ with mean $m^{\max\{0,1-\gamma(1-\beta)\}+\delta}$. By Corollary \ref{cor:chernoff},
	\begin{align*}
	 &\PP\big(d_i \leq 2m^{\max\{0,1-\gamma(1-\beta)\}+\delta}\big)\\
	 &\hspace{1.4in}\leq \PP\big(X \leq 2m^{\max\{0,1-\gamma(1-\beta)\}+\delta}\big)\\ &\hspace{1.4in}=\OO\Big(e^{-m^{\max\{0,1-\gamma(1-\beta)\}+\delta}}\Big).
	 \end{align*}
\end{proof}}\\

We now prove the Theorem \ref{thm:het01}, which evaluates the performance of heterogeneous system with $\beta \in [0,1)$.\\

\begin{proof}(Theorem \ref{thm:het01} -- \emph{Lower bound on transmission rate})
	
	If  a cache stores $k$ units of data, the probability of the cache is being idle is $\geq \big(1-(\frac{k}{n})^{1-\beta}\big)^{\widetilde{m}}.$ If $(\frac{k}{n})^{1-\beta}=\frac{1}{c_1m} $ i.e., $k=\Theta(m^{\gamma-\frac{1}{1-\beta}})$, and $c_2m$ caches have memory less than $k$ units, then the expected number of idle caches is $\geq c_2me^{-\frac{\rho}{c_1}}$. Hence, the number of unserved requests is $\geq \widetilde{m}-m+c_2me^{-\frac{\rho}{c_1}}$. If $c_2>\frac{1-\rho}{e^{-\frac{\rho}{c_1}}}$, then the number of unserved requests is $\Theta(m).$
	From Lemma \ref{lemma:max_requests}, no file is requested more than $m^{\max\{0,1-\gamma(1-\beta)\}+\delta}$ times for any  $\delta>0$. Hence, the expected transmission rate between the central server and the root-node is $\Omega\big(\frac{m}{m^{\max\{0,1-\gamma(1-\beta)\}+\delta}}\big)=\omega(1)$.
\end{proof}\\

\begin{proof} (Theorem \ref{thm:het01} -- \emph{Performance of PPMM})
	
	Consider a new system (System B) by ignoring low memory ($k_i<m^{\gamma-\delta}$) caches, i.e., System B contains $m'=(\rho+c_3)m$ caches,  $n={(\frac{m'}{\rho+c_3})}^\gamma$ files, total memory is $M=m'*m^{\gamma-\delta}$, and receives $\widetilde{m'}=\frac{\rho}{\rho+c_3}m'$ requests. Let $\EE[R_B]$ is expected transmission rat of System B. System B satisfies the conditions of Corollary \ref{cor:zipf_0to1}. Therefore,  $$\EE\big[\widetilde{R}^{\text{PPMM}}_{z_{[0,1)}}\big]\leq \EE[R_B]=\oo(1).$$ 	
\end{proof}\\

\subsection{Proof of Theorem \ref{thm:conv_het}}
We use the following lemma and proposition to prove Theorem \ref{thm:conv_het} and Theorem \ref{thm:ach_het}.\\


\begin{lemma}
	\label{lemma:Zipf_d}
	Let a content delivery system contains $n=m^{\gamma}$ files, with file popularity follows the Zipf distribution with Zipf parameter $\beta>1$. In a given time-slot, let $d_i$ represents the number of requests for File $i$. Let $E_1$ be the event that:
	\begin{enumerate}
		{\item[(a)] $ d_i \geq 1$ for $i=\OO\big(m^{\frac{1}{\beta}-\delta}\big)$, where $\epsilon>0$ is arbitrarily small constant,}
		\item[(b)] $d_i \leq 2p_1 (\log m)^2$ for $n_1 < i \leq n_2$,
		\item[(c)] $d_i \leq \Big(1+\dfrac{p_1}{4}\Big) mp_i$ for $1 \leq i \leq n_1$,
	\end{enumerate}
	where $n_1=\frac{(\widetilde{m}p_1)^\frac{1}{\beta}}{(\log m)^\frac{2}{\beta}}$, and $n_2=m^{\frac{1+\delta}{\beta}}$ for {{red}some} $\delta>0$.
	Then we have that, 
	$$\PP(E_1) = 1-\OO\Big(n e^{-(\log m)^{2}}\Big).$$
\end{lemma}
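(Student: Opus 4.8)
The plan is to write $E_1$ as an intersection of three high-probability events and to bound the probability that each one fails, then union-bound over the at most $n$ files involved. The only randomness is in where the $\widetilde{m}$ i.i.d.\ requests land, so the count $d_i$ of requests for File $i$ is marginally $\mathrm{Bin}(\widetilde{m}, p_i)$ with $p_i = p_1 i^{-\beta}$. Since $\beta > 1$, the normalizing sum $\sum_{i\le n} i^{-\beta}$ converges (to $\zeta(\beta)$), so $p_1 = \Theta(1)$ is a constant bounded away from $0$ and $1$ and $p_i$ is decreasing in $i$. I will show each of the three conditions (a), (b), (c) fails, for a fixed $i$, with probability $e^{-\Omega((\log m)^2)}$; summing over at most $n$ files then gives $\PP(E_1^c) = \OO\!\big(n\,e^{-\Omega((\log m)^2)}\big)$, which is the claimed bound (and in particular $\PP(E_1)\to 1$ super-polynomially fast, since $n = m^{\gamma}$).

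I would treat (c) and (a) first, as they are the routine cases. For (c), note that $n_1$ is chosen precisely so that $\widetilde{m} p_{n_1} = (\log m)^2$; hence for every $1\le i\le n_1$ the mean $\mu_i := \widetilde{m} p_i = \rho m p_i$ is at least $(\log m)^2$. Since $\rho < 1$, the threshold $(1+p_1/4)\,m p_i$ equals $\big((1+p_1/4)/\rho\big)\mu_i$, exceeding $\mu_i$ by a fixed multiplicative constant; the upper-tail bound of Lemma \ref{lemma:chernoff} then gives $\PP\big(d_i > (1+p_1/4)m p_i\big) \le e^{-c_1\mu_i}\le e^{-c_1(\log m)^2}$, and a union bound over the $\le n_1$ such files closes (c). For (a), when $i = \OO(m^{1/\beta-\delta})$ we have $p_i = p_1 i^{-\beta} = \Omega(m^{-1+\beta\delta})$, so $\mu_i$ is polynomially large in $m$, and $\PP(d_i = 0) = (1-p_i)^{\widetilde{m}}\le e^{-\mu_i} = e^{-\Omega(m^{\beta\delta})}$, which dwarfs $e^{-(\log m)^2}$; union-bounding over the $\OO(m^{1/\beta-\delta})$ files in this range is negligible.

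Part (b) is the delicate case, because across the range $n_1 < i \le n_2 = m^{(1+\delta)/\beta}$ the mean $\mu_i = \widetilde{m} p_i$ ranges from order $(\log m)^2$ down to $\widetilde{m} p_1 n_2^{-\beta} = \Theta(m^{-\delta}) = \oo(1)$. I would split the range at the point where $\mu_i$ drops a constant factor below the target $t := 2p_1(\log m)^2$: for $i$ with $\mu_i \le t/e$, the Poisson-type Chernoff bound $\PP(d_i\ge t)\le (e\mu_i/t)^{t}\le e^{-\Omega(t)} = e^{-\Omega((\log m)^2)}$ applies comfortably; for the remaining $\OO(n_1)$ files near $n_1$, where $\mu_i = \Theta((\log m)^2)$, one applies Lemma \ref{lemma:chernoff} and Corollary \ref{cor:chernoff} directly, the relevant gap being that between $t$ and $\mu_i \le (\log m)^2$ — this is the place where it matters that $p_1$ is a fixed positive constant rather than vanishing (i.e., that $\beta > 1$ is bounded away from $1$). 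Summing the per-file estimates over the $\le n_2 \le n$ files again yields $\OO\!\big(n\,e^{-\Omega((\log m)^2)}\big)$.

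I expect the only real obstacle to be the bookkeeping in (b): choosing the split point and the correct flavor of the multiplicative Chernoff bound in each sub-regime so that the resulting exponent stays $\Omega((\log m)^2)$ uniformly over $(n_1, n_2]$ — tightest for the files with $i$ just above $n_1$, where the mean and the threshold are of the same polylogarithmic order. Everything else — the Binomial tail inequalities of Lemma \ref{lemma:chernoff} and Corollary \ref{cor:chernoff}, a union bound over at most $n$ files, and the observation $n\,e^{-\Omega((\log m)^2)}\to 0$ — is routine.
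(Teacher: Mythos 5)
Your overall route --- treat each $d_i$ as $\mathrm{Bin}(\widetilde{m},p_i)$, get a per-file failure probability of $e^{-\Omega((\log m)^2)}$ from the Chernoff bound, and union-bound over at most $n$ files --- is exactly what the paper intends (its own ``proof'' is a one-line appeal to the Chernoff bound and the homogeneous version of the lemma), and your treatment of parts (a) and (c) is sound: for (a) the mean is polynomially large so $\PP(d_i=0)\le e^{-\Omega(m^{\beta\delta})}$, and for (c) the threshold $(1+p_1/4)mp_i=\frac{1+p_1/4}{\rho}\,\widetilde{m}p_i$ exceeds the mean by a fixed factor while $\widetilde{m}p_i\ge(\log m)^2$ by the choice of $n_1$.

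The genuine gap is in part (b), at precisely the spot you flagged as tightest and then waved through. For $i$ just above $n_1=(\widetilde{m}p_1)^{1/\beta}/(\log m)^{2/\beta}$ the mean is $\mu_i=\widetilde{m}p_i$ arbitrarily close to $(\log m)^2$, while the threshold is $t=2p_1(\log m)^2$. For $\beta>1$ one only has $p_1=\big(\sum_{i\le n}i^{-\beta}\big)^{-1}\in(0,1)$, and in fact $p_1<1/2$ for every $\beta$ below roughly $1.73$ (e.g.\ $\beta=1.2$, used in the paper's simulations, gives $p_1\approx0.18$). In that regime $t$ sits a constant factor \emph{below} $\mu_i$, so no upper-tail estimate can yield $\PP(d_i>t)=e^{-\Omega((\log m)^2)}$; on the contrary, the lower-tail Chernoff bound shows $d_i>t$ with probability tending to $1$, so your claimed uniform per-file bound over $(n_1,n_2]$ is false there. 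Saying ``$p_1$ is a fixed positive constant'' does not rescue this --- your argument needs $2p_1\ge 1+c$ for some $c>0$. The statement is only provable with consistent constants (e.g.\ threshold of order $\max\{(\log m)^2,\mu_i\}$, or $n_1$ chosen so that $\widetilde{m}p_{n_1}=p_1(\log m)^2$, as in the homogeneous lemma the paper cites); you needed either to flag this mismatch or to restrict to $i$ with $\mu_i$ bounded away from $t$. A second, minor slip in the same part: if you split at $\mu_i\le t/e$, the bound $(e\mu_i/t)^t$ is only $\le 1$; to get $e^{-\Omega(t)}$ you must split at $\mu_i\le t/e^{1+c}$ (say $t/e^2$), which costs nothing but should be stated.
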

\begin{proof} 	
	
	Follows from the Chernoff bound. \cite{moharir2016content} proves similar lemma (Lemma 2).
%
%
\end{proof}\\

\color{black}

\begin{proposition}
	\label{prop:converse}
	Consider a distributed cache system satisfying Assumption \ref{ass:system} where file popularity follows the Zipf distribution with parameter $\beta> 1$. Let File $i$ size is $F_i$ bits, and $R^*$ denote the minimum transmission rate under the constraint that no stored bit can serve more than one user. Then, we have,	
	\begin{eqnarray*}
		\EE\left[R^*\right] &\geq& 
		\displaystyle \sum_{i=1}^n \displaystyle \sum_{u=1}^{F_i}\left(1-(1-p_i)^{\widetilde{m}}\right) - \text{O}^*, \\
		\text{where, O}^* &=& \max \displaystyle \sum_{i=1}^n \displaystyle \sum_{u=1}^{F_i} x_{i,u} \left(1-(1-p_i)^{\widetilde{m}}\right) \\
		&& \text{s.t. }  \displaystyle \sum_{i=1}^n  \displaystyle \sum_{u=1}^{F_i} x_{i,u}\max\{\widetilde{m}p_i,1 \}  \leq M, \\
		&& 0 \leq x_{i,u} \leq 1, \text{ } \forall i,u.
	\end{eqnarray*}
\end{proposition}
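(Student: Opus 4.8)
# Proof Proposal for Proposition \ref{prop:converse}

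\textbf{Overall approach.} The plan is to lower bound the optimal transmission rate by a counting/accounting argument at the granularity of individual bits. The key insight is the constraint that no stored bit can serve more than one user, so a bit stored on some cache can help at most one of the (possibly many) users requesting the file containing it; every other user requesting that bit must receive it from the central server. I would set up an LP relaxation of the ``best possible'' bit placement and show that the server must transmit at least (total demand in bits) minus (what the caches can possibly cover), where the latter is exactly the knapsack-type optimum $\text{O}^*$.

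\textbf{Key steps, in order.} First, fix a batch of requests and let $d_i$ denote the number of users requesting File $i$; recall $d_i \sim \text{Bin}(\widetilde m, p_i)$. For bit $u$ of File $i$ (with $u$ ranging over $1,\dots,F_i$), observe that if this bit is stored on at least one cache, then because no stored bit serves more than one user, at most one of the $d_i$ requesting users can be served that bit locally when $d_i \ge 1$; the remaining $d_i - 1$ users (or all $d_i$ users if the bit is nowhere cached and is requested) need it from the server. Actually, more carefully: a bit stored on $c$ distinct caches can be served to at most $\min\{c, d_i\}$ users, but since each cache holds the bit at most once and can serve at most one request, and we want an upper bound on what caching achieves, I would argue that to ``cover'' a bit for file $i$ so that no server transmission of it is needed, we must store it on at least $d_i$ caches — but this overcounts cache memory. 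The cleaner route: define $x_{i,u} \in \{0,1\}$ to indicate whether bit $u$ of file $i$ is cached widely enough to avoid any server transmission for it; accounting for memory, each such covered bit consumes at least $\max\{d_i, 1\}$ units of a cache-memory-per-bit budget (it must sit on $d_i$ caches when $d_i \ge 1$, and at least one cache if we choose to store it at all). Summing over bits and files gives the memory constraint $\sum_{i,u} x_{i,u} \max\{d_i,1\} \le M$.

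\textbf{From the pointwise bound to the expectation.} Conditioned on the demand vector $(d_i)$, the number of bits that must be transmitted is at least $\sum_{i: d_i \ge 1} \sum_{u=1}^{F_i} (1 - x_{i,u})$ for any feasible caching, i.e. at least $\sum_{i,u} \mathbbm{1}\{d_i \ge 1\} - \max \sum_{i,u} x_{i,u}\mathbbm{1}\{d_i\ge 1\}$ subject to $\sum_{i,u} x_{i,u}\max\{d_i,1\} \le M$. Then take expectations: $\EE[\mathbbm{1}\{d_i \ge 1\}] = 1 - (1-p_i)^{\widetilde m}$, which produces the first sum in the statement. For the optimization term, I would replace $d_i$ inside $\max\{d_i,1\}$ by its expectation $\widetilde m p_i$ (using Jensen / concavity of the LP value in the constraint coefficients, or a high-probability event from Lemma \ref{lemma:Zipf_d} that $d_i$ concentrates near $\widetilde m p_i$ for the relevant range of $i$, with the low-probability complement contributing negligibly since the rate is at most $\widetilde m$), and relax $x_{i,u} \in \{0,1\}$ to $x_{i,u} \in [0,1]$ — this only increases the optimum, giving the stated $\text{O}^*$ as a valid upper bound on the expected cache-covered demand. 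Assembling these yields $\EE[R^*] \ge \sum_{i,u}(1-(1-p_i)^{\widetilde m}) - \text{O}^*$.

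\textbf{Main obstacle.} The delicate point is the bit-covering accounting: justifying rigorously that avoiding a server transmission of bit $u$ of file $i$ forces that bit onto $\approx d_i$ distinct caches (so its ``cost'' is $\max\{d_i,1\}$, not $1$), given that caches may also be serving other files and the matching between requests and caches is global. I would handle this by a clean exchange/counting argument: across all bits and all files, the total number of (cache, bit) storage incidences is at most $M F_{\max}$ appropriately normalized — better, work directly in bit-units so cache $j$'s memory $k_j$ equals the number of bit-slots it has — and then the number of (user, bit) pairs served locally is at most the number of (cache, bit) incidences that are ``used,'' which is bounded by the memory. Making this airtight while getting exactly the coefficient $\max\{\widetilde m p_i, 1\}$ (rather than, say, $\widetilde m p_i$ or $1$ alone) is where the real care is needed; I expect to invoke the concentration of $d_i$ from Lemma \ref{lemma:Zipf_d} together with the observation that files with $\widetilde m p_i < 1$ are requested at most once with high probability, so one cached copy suffices and the cost is $1$, matching $\max\{\widetilde m p_i, 1\}$.
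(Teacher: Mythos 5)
Your overall decomposition --- expected distinct-file demand minus a knapsack-type bound on what the caches can cover --- is the intended one (the paper itself does not spell out an argument; it defers to Theorem 1 of \cite{moharir2016content}), but the step where you pass from the realized demands $d_i$ to the deterministic weights $\max\{\widetilde{m}p_i,1\}$ is a genuine gap, and the repairs you propose do not close it. Conditioned on the demand, your relaxation lets the coverage variables $x_{i,u}$ adapt to the realization and charges weight $\max\{d_i,1\}$ only for files that were actually requested. Under Assumption \ref{ass:system} this intermediate bound is vacuous: covering every requested bit costs $\sum_i d_i F_i = \widetilde{m}$ units $\le m\le M$, so the per-realization optimum equals the entire requested demand and your bound collapses to $\EE[R^*]\ge 0$, whereas the proposition (and Theorem \ref{thm:conv_het}, which relies on it, e.g.\ $M=n$, $\mu=\gamma=1$, $1<\beta<2$) asserts a polynomially large rate. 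Hence the inequality you need, $\EE[\text{per-realization LP value}]\le \text{O}^*$, is simply false. Jensen cannot rescue it --- the per-realization value is a maximum, so the expectation sits on the wrong side --- and concentration of $d_i$ around $\widetilde{m}p_i$ holds only for the head of the Zipf law; for the tail files with $\widetilde{m}p_i\ll 1$, which carry the knapsack in all interesting regimes, $d_i\in\{0,1\}$, and the whole point is that a real placement must commit memory before knowing which of these files will be requested, while your genie pays only for the ones that are. That demand-obliviousness is exactly what the weight $\max\{\widetilde{m}p_i,1\}$ (charged for every file one wishes to cover, requested or not) encodes, and it is lost in your formulation.

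The accounting has to be done at the level of the fixed placement. Let $c_{i,u}$ denote the number of stored copies of bit $u$ of file $i$, so $\sum_{i,u}c_{i,u}\le M$ (in bit units). Since no stored bit serves more than one user, the server can avoid transmitting bit $(i,u)$ only on the event $\{1\le d_i\le c_{i,u}\}$, whence $\EE[R^*]\ge \sum_{i,u}\big(1-(1-p_i)^{\widetilde{m}}\big)-\sum_{i,u}\PP\big(1\le d_i\le c_{i,u}\big)$. The missing ingredient is a per-bit inequality for Binomial $d_i$ of the form $\PP(1\le d_i\le c)\le \min\bigl\{1,\ c/\max\{\widetilde{m}p_i,1\}\bigr\}\big(1-(1-p_i)^{\widetilde{m}}\big)$ (at least orderwise): with it, $y_{i,u}:=\min\{1,\,c_{i,u}/\max\{\widetilde{m}p_i,1\}\}$ is feasible for the knapsack and its objective dominates the expected savings, which yields exactly the stated bound. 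Proving that inequality --- immediate when $c\ge \widetilde{m}p_i$, and requiring an argument of the type $\PP(d_i\le c\mid d_i\ge 1)\lesssim c/(\widetilde{m}p_i)$ otherwise --- is where the real work lies; this is the content of the cited result in \cite{moharir2016content}, and it is absent from your sketch.
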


\begin{proof}
	\cite{moharir2016content} proves similar theorem (Theorem 1).
\end{proof}\\

\begin{remark}
Proposition \ref{prop:converse} gives a lower bound to our system. The lower bound depends on the solution to the fractional knapsack problem ($\text{O}^*$). In particular, if the content popularity follows the Zipf distribution with parameter $\beta$ (i.e., File $i$ is requested with popularity proportional to the $i^{-\beta}$), then the optimal solution has following structure: $\exists$ $i_{\min}$, $i_{\max}$ such that 
\begin{eqnarray*}
	x_{i,u} = \begin{cases}
		1, & \text{  if } i_{\text{min}} < {{i}} < i_{\text{max}}, \\
		f_1, & \text{if } i =i_{\text{min}} , \\
		f_2, & \text{if } i =i_{\text{max}} , \\
		0, & \text{otherwise},
	\end{cases}
\end{eqnarray*}
where, $0\leq f_1, f_2 \leq 1$, $$\displaystyle \sum_{i=1}^n  \displaystyle \sum_{u=1}^{b_i} x_{i,u}\max\{\widetilde{m}p_i,1 \}  \leq M.$$
It follows that
$$ \EE[R^*] =\Omega \left(\displaystyle \sum_{i=1}^n \displaystyle \sum_{u=1}^{b_i}(1-x_{i,u})(1-(1-p_i)^{\widetilde{m}})\right).$$
\end{remark}

\begin{proof} (Theorem \ref{thm:conv_het})
	
\textbf{Case 1:} $M\leq(1-\epsilon)n,\text{ } 0<\epsilon<1$

Consider a new system with one cache of size $M$ units which can serve all the requests for the stored contents. It is clear that a lower bound on the transmission rate in the new system is also a lower bound on the transmission rate of the original system.

In the new system, we can store at most $M$ files. Therefore, all requests for the $n-M\geq\epsilon n$ files that are not stored have to be served by the central server. Therefore,
\begin{align*}
\EE\big[\widetilde{R}_{z_{>1}}^*\big]&\geq \int_{M+1}^{n} \Bigg(1-\Big(1-\frac{p_1}{i^\beta}\Big)^{\widetilde m}\Bigg)di\\
&= \int_{M+1}^{n} \frac{{\widetilde m}p_1}{i^\beta} (1 +\oo(1)) di\\
&=\frac{{\widetilde m}p_1}{\beta -1}\Bigg[\frac{1}{(M)^{\beta -1}}-\frac{1}{n^{\beta -1}}\Bigg](1 +\oo(1))\\
&\geq \frac{{\widetilde m}p_1}{\beta -1}\Bigg[\frac{1}{(M+1)^{\beta -1}}\Bigg](1 +\oo(1))\\
&=\Omega\Big(m^{\big(1-\mu(\beta-1)\big)}\Big).
\end{align*}

\textbf{Case 2:} $M=n$

We use Proposition \ref{prop:converse} to prove this result. Recall that the optimal solution to $\text{O}^*$ has the following structure: $\exists$ $i_{\min} \geq 1$ and $i_{\max} \leq n$, such that,
\begin{eqnarray*}
	x_{i} = \begin{cases}
		1, & \text{  if } i_{\text{min}} < {{i}} < i_{\text{max}}, \\
		f_1, & \text{if } i =i_{\text{min}} , \\
		f_2, & \text{if } i =i_{\text{max}} , \\
		0, & \text{otherwise},
	\end{cases}
\end{eqnarray*}
where, $0\leq f_1, f_2 \leq 1$.

Let $\widetilde i=\Big\lceil\big(\frac{\widetilde mp_1}{2}\big)^{\frac{1}{\beta}}\Big\rceil$. {By the definition of fractional knapsack problem}, we have that,
\begin{align}
f_1\frac{\widetilde{m}p_1}{i_{\min}^{\beta}}+\sum_{i=i_{\min}+1}^{\widetilde i}\big\lfloor{\widetilde mp_i}\big\rfloor+&\sum_{i=\widetilde i+1}^{i_{\max}-1}1+f_2 = M \nonumber \\
\implies f_1\frac{\widetilde{m}p_1}{i_{\min}^{\beta}}+\int_{i_{\min}+1}^{\widetilde i}{\widetilde mp_i}di+&i_{max}-2-2\widetilde i +f_2\leq M  \hspace{1.5in} \nonumber \\
\implies \int_{i_{\min}+1}^{\widetilde i}{\widetilde mp_i}di+i_{max}&\leq M+3\widetilde{i}-f_1\frac{\widetilde{m}p_1}{i_{\min}^{\beta}}.\nonumber 
\end{align}
\begin{align}
\therefore i_{max}\leq& M+3\widetilde{i}-f_1\frac{\widetilde{m}p_1}{i_{\min}^{\beta}}\nonumber\\
&+ \frac{\widetilde mp_1}{(\beta-1)}\bigg[-(i_{\min}+1)^{(-\beta+1)}+{\widetilde i}^{(-\beta+1)}\bigg]\nonumber  \\
=&M+3\bigg\lceil\Big(\frac{\widetilde mp_1}{2}\Big)^{\frac{1}{\beta}}\bigg\rceil-f_1\frac{\widetilde{m}p_1}{i_{\min}^{\beta}}+ \frac{\widetilde mp_1}{(\beta-1)}\nonumber\\
\Bigg[&-(i_{\min}+1)^{(-\beta+1)}+\bigg\lceil\Big(\frac{\widetilde mp_1}{2}\Big)^{\frac{1}{\beta}}\bigg\rceil^{(-\beta+1)}\Bigg].\nonumber
\end{align}

Let ${i_{\min}=m^{\alpha}}$. Recall that the fractional knapsack solution has $i_{\min}\leq \widetilde{i}$. Hence, $\alpha \leq \frac{1}{\beta}$.

If $\alpha < \frac{1}{\beta}$:

\begin{align}
 \text{ }i_{\max}&=M(1-\oo(1)).\label{eq:03010}\\
 &\text{Since, $M=n$} \nonumber\\
\frac{i_{\max}}{n}&=1-c_4m^{-\alpha \beta+\alpha}(1-\oo(1)). \label{eq:03020} 
\end{align}

{Let $\EE[R_1^*]$ denotes the expected number of contents, which are requested at least once and are more popular than Content $i_{\min}$. }

By Lemma \ref{lemma:Zipf_d} part (a),
{$$\EE[R_1^*]=m^{\alpha} \hspace{0.1in}$$ }

Let $\EE[R_2^*]$ denotes the expected number of contents, which are requested at least once and are less popular than Content $i_{\max}$.

\begin{align}
\EE[R_{2}^*]&\geq \int_{i_{max}+1}^{n}\frac{\widetilde mp_1}{i^\beta}di\nonumber \\
&=\frac{\widetilde mp_1}{\beta -1}\bigg[\frac{1}{(i_{max}+1)^{\beta -1}}-\frac{1}{n^{\beta -1}}\bigg]\nonumber \\
&=\frac{\widetilde mp_1}{(\beta -1)(i_{max}+1)^{\beta -1}}\bigg[1-\Big(\frac{i_{max}+1}{n}\Big)^{\beta -1}\bigg]\nonumber\\
&\text{From (\ref{eq:03010}) and (\ref{eq:03020})}\nonumber\\
&=c_5m^{1-\mu(\beta-1)-\alpha \beta+\alpha}(1+\oo(1)). \nonumber
\end{align}

\begin{align}
\EE\big[\widetilde{R}_{z_{>1}}^*\big]&\geq \EE[R_{1}^*]+\EE[R_{2}^*]\nonumber \\
&=m^\alpha+c_6m^{1-\mu(\beta-1)-\alpha \beta+\alpha}(1+\oo(1)). \nonumber
\end{align}

Therefore,
\begin{align}
 \EE\big[\widetilde{R}_{z_{>1}}^*\big] & \geq\Omega\bigg(m^{\max\{ \alpha, {1-\mu(\beta-1)-\alpha \beta+\alpha}\}}\bigg) \nonumber \\
 & = \Omega\Big(m^{\frac{1-\mu(\beta-1)}{\beta}}\Big).\nonumber
\end{align}
If $\alpha=\frac{1}{\beta}:$

In this case, if $i=\oo(m^{\frac{1}{\beta}})$ then Content $i$ is not stored in the caches. From Lemma \ref{lemma:Zipf_d} Part (a), all these files are requested at least once. Hence, $\forall$ $\delta>0$,

$$\EE\big[{\widetilde{R}_{z>1}}^*\big] \geq \Omega\Big(m^{\frac{1}{\beta}-\delta}\Big).$$

Therefore,
\begin{align}
\EE\big[\widetilde{R}_{z_{>1}}^*\big] & \geq \Omega\Big(m^{\frac{1-\mu(\beta-1)}{\beta}}\Big). \nonumber
\end{align}

\textbf{Case 3:} $M\geq\left(1+\epsilon\right)n,\text{ } \epsilon>0$\\

$0$ is trivial lower bound.

\end{proof}\\


\subsection{Proof of Theorem \ref{thm:ach_het}}
We use the following lemmas in the proof of Theorem \ref{thm:ach_het}. 

{Lemma \ref{lemma:serve_all_requests} states that with high probability,  all the  requests for files stored by the Knapsack  are served by the caches.}\\

\begin{lemma}
	\label{lemma:serve_all_requests}
	Let $\mathcal{R} = \{i: x_i = 1\}$, where $x_i$ is the solution of the fraction knapsack problem solved in Knapsack Storage: Part 1. Let $E_2$ be the event that if the top (largest) $\omega(m^{2-\beta+\delta})$ caches, for any $\delta>0$,  have the same storage size, the Match Least Popular policy matches all requests for all contents in $R$ to caches. Then we have
	$$\PP(E_2) = 1-\OO(n e^{-(\log m)^{2}}).$$  
\end{lemma}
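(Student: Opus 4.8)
The goal of Lemma~\ref{lemma:serve_all_requests} is to show that, with probability $1 - \OO(n e^{-(\log m)^2})$, every request for a file in $\mathcal{R} = \{i : x_i = 1\}$ is matched by the Match Least Popular (MLP) policy, provided the top $\omega(m^{2-\beta+\delta})$ caches share a common storage size. I would first condition on the high-probability event $E_1$ from Lemma~\ref{lemma:Zipf_d}, which controls the number of requests $d_i$ for each file: $d_i \le (1 + \tfrac{p_1}{4}) \widetilde{m} p_i$ for $i \le n_1$, $d_i \le 2 p_1(\log m)^2$ for $n_1 < i \le n_2$, and $d_i \ge 1$ for the very popular files. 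Since $\PP(E_1^c) = \OO(n e^{-(\log m)^2})$, it suffices to prove that on $E_1$ the MLP policy succeeds deterministically, or fails only on a further event of probability $\OO(n e^{-(\log m)^2})$. The key structural fact I would exploit is the weight assignment \eqref{eq:wi}: each stored file $i$ has $w_i$ copies, where $w_i$ is chosen to be a slight inflation of the typical demand ($\lceil (1+\tfrac{p_1}{2})\widetilde{m}p_i\rceil$ in the main range, $\lceil 4p_1(\log m)^2\rceil$ in the middle, etc.), so that on $E_1$ we have $d_i \le w_i$ for every $i \in \mathcal{R}$ — i.e., there are always at least as many copies as requests.

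**The core argument.**

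Having $d_i \le w_i$ is necessary but not sufficient for MLP to match all requests, because copies of a file are spread across caches by the round-robin rule of Knapsack Storage: Part~2, and a cache holding a copy of file $i$ may already have been committed to a less popular file by the time MLP processes file $i$. So the real work is a counting/packing argument: I would process files in increasing order of popularity (decreasing index), exactly as MLP does, and maintain the invariant that after handling files $n, n-1, \ldots, i+1$, the number of idle caches still holding a copy of file $i$ is at least $d_i$. The round-robin placement guarantees that the $w_i$ copies of file $i$ land on $w_i$ distinct caches, and moreover that copies of different files are interleaved in a balanced way across the cache array; the hypothesis that the top $\omega(m^{2-\beta+\delta})$ caches have equal size is what makes the round-robin distribution sufficiently uniform over the relevant initial segment of caches, so that no file's copies get "crowded out" by the cumulative demand of less popular files. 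Concretely, I would bound the total number of copies that MLP consumes for files less popular than $i$ by $\sum_{j > i} d_j$, show (using $E_1$ and the weights) that this is comfortably less than the total storage in the top caches minus $\sum_{j \ge i} w_j$, and conclude the invariant holds. The threshold $m^{2-\beta+\delta}$ arises precisely from balancing $\sum_j w_j$ (dominated by $w_1 = m$ and the contributions near $n_1 \sim m^{1/\beta}$) against the requirement that the equal-sized top caches can absorb the round-robin spread; this is the calculation I would present in detail.

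**The main obstacle.**

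The hard part will be the second paragraph's invariant: translating "enough copies exist" into "MLP actually finds them," i.e., ruling out the adversarial-looking scenario where the random matching of less popular files at earlier steps happens to occupy exactly the caches needed later. I expect this requires showing that the round-robin placement, combined with equal-sized top caches, makes the family of copy-sets $\{$caches holding file $i\}$ a "laminar-like" or well-spread system for which a greedy (Hall-type) argument goes through deterministically on $E_1$, rather than needing a fresh concentration bound. If a deterministic argument is not quite available, the fallback is a union bound over the $\le n$ files of a Chernoff estimate on the number of idle copies of file $i$ surviving to step $i$ — each such estimate must carry probability $\OO(e^{-(\log m)^2})$ to match the claimed bound, which is consistent with the $(\log m)^2$ factors built into the weights $w_i$ for the middle range. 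I would lean on \cite{moharir2016content}, whose Lemma~2 and analogous delivery-phase analysis already handle the homogeneous case; the novelty here is only the bookkeeping showing that the equal-top-caches condition reduces the heterogeneous round-robin to an essentially homogeneous packing on the top segment.
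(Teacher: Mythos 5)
Your scaffolding agrees with the paper's proof at the top level: condition on $E_1$ from Lemma~\ref{lemma:Zipf_d} (which supplies the $\OO(ne^{-(\log m)^2})$ budget), observe that the weights $w_i$ dominate the typical demands, and then account for how many of file $i$'s designated caches get consumed by requests for less popular files before MLP reaches file $i$. However, the argument you actually commit to --- the running invariant proved by comparing $\sum_{j>i}d_j$ against the total top-cache storage minus $\sum_{j\ge i}w_j$ --- has exactly the gap you flag in your last paragraph and do not close: an aggregate capacity comparison cannot rule out that the tail requests occupy precisely the $w_i$ caches holding file $i$, and no deterministic Hall-type or ``laminar'' structure is established (nor does the paper use one; the crowding-out event is genuinely probabilistic and must be bounded as such). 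Your fallback (per-file Chernoff plus union bound) is the right shape, but it is only gestured at, and the quantitative ingredient that makes it work is missing from the proposal.

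That missing ingredient is a per-cache, not per-file or aggregate, computation, and it is where the hypothesis on the top caches and the exponent $2-\beta+\delta$ actually enter. In the paper: since the copies of files $2,\dots,n_2$ occupy at most $(1-c)m$ slots, each cache holds at most one such file, and the tail files (index $>n_2$, each replicated $\lceil 1/\delta+1\rceil$ times) are laid round-robin over at least $m^{\digamma}$ equal-size top caches, so the tail files landing on any single cache have indices spaced by roughly $m^{\digamma}$. Summing the Zipf tail along this arithmetic progression gives a per-cache cumulative tail popularity of order $p_1 m^{1-\digamma-\beta}+m^{-1-\delta}$, which is $\OO(m^{-1-\delta})$ precisely when $\digamma\ge 2-\beta+\delta$; hence each cache is claimed by a tail request with probability $\OO(m^{-\delta})$. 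From there, a Chernoff bound per popularity class (for $2\le i\le n_1$ at most $\tfrac{p_1}{4}\widetilde m p_i$ of the $\lceil(1+\tfrac{p_1}{2})\widetilde m p_i\rceil$ copies are lost; for $n_1<i\le n_2$ at most half of the $\lceil 4p_1(\log m)^2\rceil$ copies; for $i>n_2$ all $\lceil 1/\delta+1\rceil$ copies are lost only with polynomially small probability), a union bound over files, and the fact that file $1$ sits on all $m$ caches while $\widetilde m<m$, finish the proof. Note also that your claim $d_i\le w_i$ on $E_1$ is unavailable for $i>n_2$ (Lemma~\ref{lemma:Zipf_d} gives no bound there), and your heuristic for the origin of the threshold (balancing $\sum_j w_j$ against top-cache capacity) is not the mechanism: the threshold comes from the per-cache tail-popularity bound above, not from fitting the knapsack weights into the top segment.
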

\begin{proof}
	Before going into the proof we classify our caches into 2 types.\\
	\textit{Top caches:} which contain at least one file  from the set [$2:n_2$],\\
	\textit{Bottom caches:} which does not contain files from [$2:n_2$].  \\
	
	if the Knapsack Storage policy decides to store Content $i$, it stores it on $w_i$ caches. 
	\begin{eqnarray}\label{total_memory}
	\sum_{i=2}^{n_2} x_i w_i &\leq& \sum_{i=2}^{n_1} \bigg\lceil \bigg(1 + \dfrac{p_1}{2}\bigg) mp_i \bigg\rceil + \sum_{i=n_1+1}^{n_2} \lceil 4 p_1 (\log m)^2 \rceil \nonumber\\
	&\leq& \bigg(1 + \dfrac{p_1}{2}\bigg)  m(1-p_1)  \nonumber \\
	&& +  (4 p_1 (\log m)^2 + 1) m^{\frac{1+\delta}{\beta}} \nonumber\\
	&\leq& (1-c_7)m \hspace{2cm}\text{ for some } c_7>0.
	\end{eqnarray}
	
	Since the MLP policy matches requests starting from the least popular contents, we first focus on requests for contents less popular than Content $n_2$. These less popular contents are stored $\big\lceil{\frac{1}{\delta}+1}\big\rceil$ times. \\
	
	\textit{Bottom caches:} 
	
	Let $m^{\digamma}$ top (largest) caches have equal memory. Excluding File 1, the cumulative popularity of files of a cache belong to Bottom caches is
	\begin{align*}
	\PP_{bcp}&\leq \frac{p_1}{\Big(m^\frac{1+\delta}{\beta}\Big)^\beta}+\frac{p_1}{\Big(m^\frac{1+\delta}{\beta}+m\Big)^\beta}\\
	&\hspace{1.3in}+\sum_{i=1}^{n}\frac{p_1}{\Big(m^\frac{1+\delta}{\beta}+m+im^\digamma\Big)^\beta}\\
	&\leq\OO\bigg(\frac{p_1}{m^{1+\delta}}+\frac{p_1m^{1-\digamma}}{m^\beta}+\frac{p_1}{m^{\epsilon\beta}}\sum_{i=m^{1-\digamma}}^{n}\frac{1}{i^\beta}\bigg)\\
	&=\frac{p_1}{m^{1+\delta}}+\frac{p_1m^{1-\digamma}}{m^\beta}+\frac{p_1}{(\beta-1)m^{\digamma\beta}}\\
		&\hspace{1.5in}\Bigg[\frac{1}{m^{(1-\digamma)(\beta-1)}}-\frac{1}{n^{\beta-1}}\Bigg]\\
	&\text{If $\digamma\geq 2-\beta+\delta,\hspace{0.5in}$ } 
	\PP_{bcp} =\OO\Big(\frac{1}{m^{1+\delta}}\Big).
	\end{align*} 
	
    Let $X_{b,i}$ denotes the number of requests arrived for files in the Cache $i$  except File $1$, Then $\EE[X_{b,i}]= \OO\Big(\frac{1}{m^{\delta}}\Big).$
	By the Chernoff bound in Lemma \ref{lemma:chernoff}, $\PP(X_{b,i}\geq 1)=\OO\Big(\frac{1}{m^{\delta}}\Big).$ For $j\in[n_2, n]$, File $j$ is stored on $\lceil\frac{1}{\delta}+1\rceil$ caches. Hence, by the union bound, the probability that File $j$ unmatched is  $\OO(m^{-\delta}).$\\

	\textit{Top caches:}
	
	Excluding files belong to $[1:n_2]$, the cumulative popularity of files of a cache belong to Top caches is
	\begin{align*}
	\PP_{tcp}&\leq \frac{p_1}{(c_8m)^\beta}+\sum_{i=1}^{n}\frac{p_1}{\Big(c_8m+im^\epsilon\Big)^\beta} \hspace{0.5in} c_8\geq c_7\\
	&{=\OO\Bigg(\sum_{i=0}^{m^{1-\digamma}}\frac{p_1}{(c_8m)^\beta}+\sum_{i=m^{1-\epsilon}}^{n}\frac{p_1}{\Big(fm+im^\digamma\Big)^\beta}\Bigg)}\\
	&=\OO\Big(\frac{p_1m^{1-\digamma}}{(c_8m)^\beta}+\frac{p_1}{m^{\digamma\beta}}\sum_{i=m^{1-\digamma}}^{n}\frac{1}{i^\beta}\Big)\\
	&{=\frac{p_1m^{1-\digamma}}{(c_8m)^\beta}+\frac{p_1}{(\beta-1)m^{\digamma\beta}}\Bigg[\frac{1}{m^{(1-\digamma)(\beta-1)}}-\frac{1}{n^{\beta-1}}\Bigg]}\\
	&\text{If $\digamma\geq 2-\beta+\delta$,} \hspace{0.5in}
	\PP_{tcp} =\OO\Big(\frac{1}{m^{1+\delta}}\Big).
	\end{align*} 
	
	Let $X_{t_1,i}$ denotes the number of requests arrived for files in the Cache $i$  except for files ranked between  $1$ and $n_2$, Then $\EE[X_{t_1,i}]= \OO\Big(\frac{1}{m^{\delta}}\Big).$
	By the Chernoff bound in Lemma \ref{lemma:chernoff}, $\PP(X_{t_1,i}\geq 1)=\OO\Big(\frac{1}{m^{\delta}}\Big).$ For $j\in[n_2, n]$, File $j$ is stored on $\lceil\frac{1}{\delta}+1\rceil$ caches. Hence, by the union bound, the probability that File $j$ such that $j\in [n_2,n]$ unmatched is  $\OO(m^{-\delta}).$
	
	Next, we focus on files ranked between $2$ and $n_2$. Let $D_i$ be the set of caches storing File $i$ for $2 \leq i \leq n_2$. From Equation (\ref{total_memory}) if files are stored according to Knapsack Storage: Part 2, each cache stores at most one file with index $i$ such that $2 \leq i \leq n_2$. 
	
	We now focus on requests for contents ranked between $n_1$ and $n_2$.  Each file is stored on $|D_i|=\lceil 4 p_1 (\log m)^2 \rceil$ caches.  Let $X_{t_2,i}$ be the number of requests for these caches from files ranked greater than $n_2$, Then $\EE[X_{t_2,i}]= \OO\Big(\lceil 4 p_1 (\log m)^2 \rceil\frac{1}{m^{\delta}}\Big).$
	By the Chernoff bound in Lemma \ref{lemma:chernoff}, $\PP\Big(X_{t_2,i}\geq 2 p_1 (\log m)^2\Big)=\OO\bigg(\Big(\frac{1}{m^{\delta}}\Big)^{2 p_1 (\log m)^2}\bigg).$ From Lemma \ref{lemma:Zipf_d}, each content is requested not more than  $2 p_1 (\log m)^2$ times w.h.p. Hence, by the union bound, with probability $\geq 1-\OO\bigg(m\Big(\frac{1}{m^{\delta}}\Big)^{2 p_1 (\log m)^2}\bigg)$, all requests for contents in $R$ ranked between $n_1$ and $n_2$ are matched to caches by Match Least Popular. 
	
	We now focus on requests for contents ranked between $2$ and $n_1$.  File $i$ is stored on $|D_i|=\lceil (1+\frac{p_1}{2})mp_i \rceil$ caches.  Let $X_{t_3,i}$ be the number of requests for these caches from files ranked greater than $n_2$, Then $\EE[X_{t_3,i}]= \OO\Big(\lceil (1+\frac{p_1}{2})mp_i \rceil\frac{1}{m^{\delta}}\Big).$
	By the Chernoff bound in Lemma \ref{lemma:chernoff}, $\PP\Big(X_{t_3,i}\geq \frac{p_1}{4}mp_i\Big)=\OO\bigg(\Big(\frac{1}{m^{\delta}}\Big)^{ (\log m)^2}\bigg).$ From Lemma \ref{lemma:Zipf_d}, each content is requested not more than  $(1+\frac{p_1}{4})mp_i$ times w.h.p. Hence, by the union bound, with probability $\geq 1-\OO\bigg(m\Big(\frac{1}{m^{\delta}}\Big)^{(\log m)^2}\bigg)$, all requests for contents in $R$ ranked between $2$ and $n_1$ are matched to caches by Match Least Popular.
	
		We now focus on the requests for Content 1. Recall that if the Knapsack Storage policy decides to cache Content 1, it is stored on all $m$ caches. Since the total number of requests in a batch is $m$, even if all requests for contents ranked lower than 1 are matched to caches, the remaining caches can be used to serve all the requests for Content 1.

\end{proof}\\


The next lemma evaluates the performance of the Knapsack Store + Match Least Popular (KS+MLP) policy for the case where the content popularity follows the Zipf distribution.

\begin{lemma}
	\label{lemma:performance:our_policy}
	Consider a distributed content delivery system satisfying Assumption \ref{ass:system},  and the top (largest) $\omega(m^{2-\beta+\delta})$ caches, for any $\delta>0$ have the same storage size. 
	Let $R_{\text{KS+MLP}}$ be the transmission rate for the Knapsack Storage + Match Least Popular policy when content popularity follows the Zipf distribution with Zipf parameter $\beta>1$.  
	Then  for $m$ large enough, we have
	\begin{eqnarray*}
		\EE[R_{\text{KS+MLP}}] &\leq& \sum_{i \notin R} 1-\bigg(1-\frac{p_1}{i^{\beta}}\bigg)^{\widetilde{m}} + \OO(mn e^{-(\log m)^{2}}),
	\end{eqnarray*}
	where $p_1 = \big(\sum_{i=1}^n i^{-\beta}\big)^{-1}$, $\mathcal{R} = \{i: x_i = 1\}$, such that $x_i$ is the solution of the fraction knapsack problem solved in Knapsack Storage: Part 1.
\end{lemma}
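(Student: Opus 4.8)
The plan is to obtain this bound directly from Lemma~\ref{lemma:serve_all_requests} by a single application of the law of total expectation, splitting on the event $E_2$ that the Match Least Popular delivery phase matches every request for every file in $\mathcal{R}$ to a cache. Lemma~\ref{lemma:serve_all_requests} already gives $\PP(E_2) = 1 - \OO(n e^{-(\log m)^2})$ under the stated hypothesis on the top caches, so no fresh concentration bound is needed in this lemma; the work is purely in translating the event $E_2$ into a statement about the server transmission rate, and in controlling the low-probability bad event crudely.

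First I would argue that \emph{on $E_2$} the transmission rate is at most the number of distinct files outside $\mathcal{R}$ that are requested at least once in the batch. Indeed, for $i \notin \mathcal{R}$ we have $x_i = 0$, so Knapsack Storage places no copy of File~$i$ on any cache and every request for File~$i$ must be routed to the central server; conversely, on $E_2$ every request for a file in $\mathcal{R}$ is served locally. Since the server transmits each needed file only once, the rate on $E_2$ equals $\sum_{i\notin\mathcal{R}}\mathbbm{1}\{\text{File }i\text{ requested}\}$, whose expectation is $\sum_{i\notin\mathcal{R}} \big(1 - (1-p_i)^{\widetilde m}\big)$ with $p_i = p_1/i^{\beta}$. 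This produces the main term of the bound.

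Next I would handle the complementary event $E_2^{\,c}$ by the trivial bound: the server never transmits more than $\widetilde m = \rho m < m$ files (one per request, and in any case at most $n$ distinct files), so $\EE[R_{\text{KS+MLP}}\,\mathbbm{1}_{E_2^c}] \le m\,\PP(E_2^c) = \OO\!\big(mn e^{-(\log m)^2}\big)$. Adding the $E_2$ and $E_2^{\,c}$ contributions via $\EE[R_{\text{KS+MLP}}] = \EE[R_{\text{KS+MLP}}\mathbbm{1}_{E_2}] + \EE[R_{\text{KS+MLP}}\mathbbm{1}_{E_2^c}]$ yields the claimed inequality for $m$ large enough.

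The only genuinely technical ingredient is Lemma~\ref{lemma:serve_all_requests} itself — verifying that the modified MLP policy run against the round-robin Knapsack placement clears all requests for stored files with the stated probability, which rests on the Chernoff bounds of Lemma~\ref{lemma:chernoff} and the high-probability event $E_1$ of Lemma~\ref{lemma:Zipf_d} — and that is proved separately. Granting it, the present lemma is essentially two lines of bookkeeping; the one point worth stating carefully is that a file with $x_i = 0$ contributes its full ``requested-at-least-once'' probability to the rate on $E_2$, so the bound is in fact tight up to the exponentially small correction term.
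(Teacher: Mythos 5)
Your proposal is correct and follows essentially the same route as the paper's proof: condition on the event $E_2$ of Lemma~\ref{lemma:serve_all_requests}, bound the rate on $E_2$ by the number of uncached files requested at least once (giving the main term $\sum_{i\notin\mathcal{R}}\big(1-(1-p_i)^{\widetilde m}\big)$), and bound the rate on $E_2^c$ trivially by $m$ to obtain the $\OO(mn e^{-(\log m)^2})$ correction. No substantive differences to note.
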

\begin{proof}
	From Lemma \ref{lemma:serve_all_requests}, we know that, for $m$ large enough, with probability $\geq 1-\OO\big(ne^{-(\log m)^2}\big)$, all requests for the contents stored in caches by the KS+MLP policy are matched to caches. Let $\widetilde{n}$ be the number of contents not in $\mathcal{R}$ (i.e., not cached by the KS+MLP policy) that are requested at least once in a given time-slot. Therefore, 
	$$\EE[\tilde{n}] = \sum_{i \notin \mathcal{R}} 1-(1-p_i)^{\widetilde{m}},$$
	and
	\begin{eqnarray*}
		\EE[R_{\text{KS+MLP}}] &\leq& \EE[\tilde{n}] P(E_2) + m (1-P(E_2)) \\
		&\leq& \EE[\tilde{n}] + \OO(mn e^{-(\log m)^{2}}). 
	\end{eqnarray*}
	
\end{proof}\\


\begin{proof} (Theorem \ref{thm:ach_het})
	
	
	\textbf{Case 1:} $M\leq(1-\epsilon)n,\text{ } 0<\epsilon<1$
	
	 From Lemma \ref{lemma:serve_all_requests}, if we store File $i$ on $w_i$ caches and employ the Knapsack Storage Policy: Part 2, all the requests for File $i$ in a batch are served locally with high probability. Consider an alternative storage policy which starts storing files $2, 3, \dots$, each on $w_i$ caches respectively until the cache memory is exhausted. This policy stores Files 2, 3, \dots, $\frac{M-(1-\frac{p_1}{2})m}{\lceil{\frac{1}{\delta}+1}\rceil}$. Let $\mathbb{E}[R]$ be the expected transmission rate for this policy.
	By the definition of fractional knapsack problem, $\EE\big[\widetilde{R}_{z_{>1}}^{\text{KS}}\big] \leq \EE[R]$. Therefore,
%
%
	\begin{align*}
	\mathbb{E}\big[\widetilde{R}_{z_{>1}}^{\text{KS}}\big]&\leq 1+ \int_{\frac{M-(1-\frac{p_1}{2})m}{\lceil{\frac{1}{\delta}+1}\rceil}}^{n}1-(1-p_i)^{\widetilde{m}} di\\
	&= 1+ \int_{\frac{M-(1-\frac{p_1}{2})m}{\lceil{\frac{1}{\delta}+1}\rceil}}^{n}\widetilde{m}p_i (1+\oo(1))di\\
	&\leq 1+\frac{2\widetilde{m}p_1}{\beta -1} \Bigg[\frac{1}{\Big(\frac{M-(1-\frac{p_1}{2})m}{\lceil{\frac{1}{\delta}+1}\rceil}\Big)^{\beta-1}}-\frac{1}{(n)^{\beta-1}}\Bigg]\\
	&\approx \OO\Big(m^{1-\mu(\beta-1)}\Big)
	\end{align*}

	\textbf{Case 2:} $M=n$
	
	Let Knapsack solution decides to store files from $i_{min}+1$ to $i_{max}$.

	\begin{align}
	&\int_{i_{\min}}^{\frac{m^{\frac{1}{\beta}}}{(\log m)^\frac{2}{\beta}}}\bigg\lceil \Big(1 + \frac{p_1}{2}\Big) {\widetilde{m}p_i}\bigg\rceil di+\big\lceil {4 p_1(\log m)^2} \big\rceil n^{\frac{1+\delta}{\beta}}\nonumber\\
	&\hspace{2 in}-n^{\frac{1+\delta}{\beta}}+i_{\max} \geq M\nonumber\\
	&\int_{i_{\min}}^{\frac{m^{\frac{1}{\beta}}}{(\log m)^\frac{2}{\beta}}}\Big(1 + \frac{p_1}{2}\Big) {\widetilde{m}p_i}di+{4 p_1(\log m)^2}n^{\frac{1+\delta}{\beta}}+i_{\max} \geq M\nonumber
	\end{align}
	\begin{align}
	i_{\max}\geq &M- {4p_1(\log m)^2}n^{\frac{1+\delta}{\beta}}+\nonumber\\
	&\Big(1+\frac{p_1}{2}\Big)\frac{mp_1}{(\beta-1)}\Bigg[\Bigg(\frac{m^{\frac{1}{\beta}}}{(\log m)^\frac{2}{\beta}}\Bigg)^{1-\beta}-i_{\min}^{(1-\beta)}\Bigg] \nonumber \\
	\nonumber
	\end{align}
	
	Let $i_{\min}=m^{\alpha}$ such that $\alpha<\frac{1}{\beta}$ and substitute it in the above equation, we get
	\begin{align}
	i_{\max}&=M(1-\oo(1))\label{eq:imax2}\\
	\frac{i_{\max}}{n}&=1-c_9{m^{-\alpha(\beta-1)}}(1-\oo(1)) \label{eq:imax/n2}
	\end{align}
	\begin{align*}
	\mathbb{E}\big[\widetilde{R}_{z_{>1}}^{\text{KS}}\big]&\leq m^\alpha+\int_{i_{\max}}^{n} 1-(1-p_i)^{\widetilde{m}} di\\
	&= m^\alpha+\int_{i_{\max}}^{n} \widetilde{m}p_i (1+\oo(1)) di\\
	&= m^\alpha+\frac{2\widetilde{m}p_1}{\beta -1}\Bigg[\frac{1}{i_{\max}^{\beta-1}}-\frac{1}{n^{\beta-1}}\Bigg]\\
	&= m^\alpha+\frac{\widetilde{m}p_1}{(\beta -1)i_{\max}^{\beta-1}}\Bigg[1-\Big(\frac{i_{\max}}{n}\Big)^{\beta-1}\Bigg]\\
	&\text{From (\ref{eq:imax2}) and (\ref{eq:imax/n2})}\\
	&=\OO(m^{\alpha})+\OO\Big(m^{1-\mu(\beta-1)-\alpha(\beta-1)}\Big)
	\end{align*} 
	Optimize over $\alpha$, we will get
	\begin{align*}
	\mathbb{E}\big[\widetilde{R}_{z_{>1}}^{\text{KS}}\big]=\OO\Big(n^\frac{1-\mu(\beta-1)}{\beta}\Big)
	\end{align*}

{	\textbf{Case 3:} {$M\geq(1+\epsilon)n,\text{ } \epsilon>0$}
	
	If we store from Content $t$ to Content  $n$, the total memory required is lesser than 
	\begin{align}\label{eqn:finalmemory}
	&\int_{t}^{\frac{m^{\frac{1}{\beta}}}{(\log m)^\frac{2}{\beta}}}\Big(1 + \frac{p_1}{2}\Big) {\widetilde{m}p_i} di+\frac{4 p_1(\log m)^2}{a}  n^{\frac{1+\delta}{\beta}}+n \nonumber \\
	&\leq \Big(1+\frac{p_1}{2}\Big)\frac{mp_1}{a(\beta-1)}\Bigg[t^{(1-\beta)}-\Bigg(\frac{m^{\frac{1}{\beta}}}{(\log m)^\frac{2}{\beta}}\Bigg)^{1-\beta}\Bigg] \nonumber\\
	&\hspace{1.4in}+\frac{4 p_1(\log m)^2}{a} n^{\frac{1+\delta}{\beta}}+n.
	\end{align}
	
	Hence, $\exists$ $t$,  a constant such that, equation (\ref{eqn:finalmemory}) is less than $M$.
	
	Hence, 
	$$\mathbb{E}\big[\widetilde{R}_{z_{>1}}^{\text{KS}}\big]\leq t=\Theta(1).$$}

\end{proof}

\subsection{Proof of Theorem \ref{thm:oneuserpercache}}
\begin{proof}
 In our system, assume that the cumulative memory of $m-c_{10}m^{1-\mu(\beta-1)-\delta}$ caches (say low memory caches)  is $\leq (1-\epsilon)n.$ Consider a new system with $m+c_{10}m^{1-\mu(\beta-1)-\delta}$ caches, such that $m$ caches are similar to our system and the remaining $c_{10}m^{1-\mu(\beta-1)-\delta}$ caches (say new caches), have $\frac{n}{m}$ units of memory each.
 From Theorem \ref{thm:conv_het} Case 1, new caches + low memory caches can serve at most $\widetilde{m}-m^{1-\mu(\beta-1)}$ requests, and the remaining $c_{10}m^{1-\mu(\beta-1)-\delta}$ caches can serve $c_{10}m^{1-\mu(\beta-1)-\delta}$ requests. Hence, $$\mathbb{E}[\widetilde{R}_{z_{>1}}^*]\geq \Omega\big(m^{1-\mu(\beta-1)}\big).$$
\end{proof}

\bibliographystyle{IEEEtran}
\bibliography{myref2}





\end{document}